\documentclass{article} 

\usepackage{amsthm}
\newtheorem{proposition}{Proposition}
\newtheorem{corollary}{Corollary}
\newtheorem{theorem}{Theorem}
\newtheorem{lemma}{Lemma}
\newtheorem{example}{Example}
\usepackage{graphicx}
\usepackage{newtxtext,newtxmath}
\usepackage[usenames,dvipsnames]{xcolor}
\usepackage{amsfonts}
\usepackage{amsmath}

\usepackage{bbm}
\usepackage{subcaption}
\usepackage{pgfplots}
\usepackage{tikz}
\usetikzlibrary{patterns}

\newcommand{\R}{\mathbb{R}}

\usetikzlibrary{shapes,arrows}

\usepackage{enumitem}

\usepackage{subcaption}
\usepackage{booktabs}
\usepackage{float}

\usepackage{sidecap}

\usepackage[normalem]{ulem}

\usepackage{adjustbox}

\usetikzlibrary{shapes.misc}

\tikzset{cross/.style={cross out, draw=black, minimum size=2*(#1-\pgflinewidth), inner sep=0pt, outer sep=0pt},
cross/.default={6pt}}

\usepackage[hypertexnames=false]{hyperref}

\newcommand{\xmin}{\underline{x}}
\newcommand{\xmax}{\overline{x}}

\usepackage{mathtools}
\usepackage{multirow}

\begin{document}

\title{Function approximations for counterparty credit exposure calculations\thanks{This manuscript is currently under review. It incorporates revisions based on reviewer feedback
but has not yet been accepted for publication.}}

\author{Domagoj Demeterfi
\thanks{\texttt{d.demeterfi@qmul.ac.uk},
School of Mathematical Sciences, Queen Mary University of London, UK,
LAFRA, Faculty of Electrical Engineering and Computing, University of Zagreb, Croatia}  
   \and
        Kathrin Glau
        \thanks{\texttt{kathrin.glau@tum.de}, School of Mathematical Sciences, Queen Mary University of London, UK}		  
        \and
        Linus Wunderlich
        \thanks{\texttt{l.wunderlich@qmul.ac.uk}, School of Mathematical Sciences, Queen Mary University of London, UK}
}

\date{}

\maketitle

\begin{abstract}

The challenge to measure exposures regularly forces financial institutions into a choice between an overwhelming computational burden or oversimplification of risk. To resolve this unsettling dilemma, we systematically investigate replacing frequently called  derivative pricers by function approximations covering all practically relevant exposure measures, including quantiles. We prove error bounds for exposure measures in terms of the $L^p$ norm, $1 \leq p  < \infty$, and for the uniform norm. To fully exploit these results, we employ the Chebyshev interpolation and show exponential convergence of the resulting exposure calculations. As our main result we derive probabilistic and finite sample error bounds under mild conditions including the natural case of unbounded risk factors. We derive an asymptotic efficiency gain scaling with $n^{1/2-\varepsilon}$ for any $\varepsilon>0$ with $n$ the number of simulations. Our numerical experiments cover callable, barrier, stochastic volatility and jump features. Using 10\,000 simulations, we consistently observe significant run-time reductions in all cases with speed-up factors up to 230, and an average speed-up of 87. We also present an adaptive choice of the interpolation degree.  Finally, numerical examples relying on the approximation of Greeks highlight the merit of the method beyond the presented theory.

\end{abstract}

\section{Introduction}

The global financial crisis of 2007--2009 revealed dramatic consequences of miscalculations of risks associated to derivatives trading. The accurate estimation of counterparty exposures for complex financial derivatives poses a significant challenge to financial institutions. To capture market realities a derivative needs to be numerically valued for a large number of scenarios drawn from sophisticated stochastic models. This challenge typically results in an uncomfortable compromise between the model fit and computational as well as operational complexity.
We observe how severely that key challenge affects risk management practice, as we are aware of banks' intentions to discard internal models to fall back on standardized approaches in the course of the most recent regulatory developments \cite{wilkes2023}.
For example, even if a model as feasible as Heston's is employed by the pricing desk, the new challenges may well force the risk desk into the standardized approach based on the Black-Scholes-Merton (BSM) model delta and prudent adjustments. Such practice typically leads to higher capital requirements due to shortcomings of the BSM model that need to be compensated by simple adjustments. 
Academia can play a key role in order to facilitate the use of advanced stochastic models while maintaining practical feasibility of challenging risk computations. Overcoming this bottleneck allows industry to harvest decades of essential advancements in stochastic modelling for the critical task of responsible risk assessment. In order to move these developments forward we tackle the computational bottleneck of frequent pricer calls arising in the quantification of the market risk component of counterparty credit risk, i.e., in exposure calculations.

The counterparty credit \emph{exposure} $X_t$ is the amount that would be lost in the event that the counterparty to a derivative trade defaults at time $t$. For a non-collateralized trade, the exposure is simply the positive part of the stochastic value of the derivative $V_t$ depending on the future market state $Z_t$,
\begin{equation}\label{eq:exp_str}
X_t = V_t(Z_t)^+ = \max\left\{ V_t(Z_t), 0 \right\}.
\end{equation}
The future market state $Z_t \in \mathbb{R}^D$ comprising $D$ risk factors is a random vector on an atomless probability space $(\Omega, \mathcal{F}, \mathbb{P})$. Let the value function $V_t:\textrm{supp} (Z_t) \to \mathbb{R}$ be sufficiently regular so that $X_t \in L^0(\Omega, \mathcal{F}, \mathbb{P})$, where we follow the standard notation, as in, e.g., \cite[Appendix~A.7]{follmer2008}. {Additionally, for a function $f:\mathcal{D} \to \mathbb{R}$ and $S\subset \mathcal{D}$ we introduce the uniform norm as $\Vert f \Vert_{\infty,S} = \sup_{s \in S} |f(s)|$ since many numerical function approximations allow for error estimates using the genuine supremum instead of the essential one. Note that if $S=\mathcal{D}$ we drop the subscript $S$.} To measure the degree of exposure, we choose to define a (monetary) \emph{exposure measure} as a map $\rho : L^0 \to \mathbb{R}\cup \{\infty\}$ with only the following properties:
\begin{enumerate}
	\item $X_1 \leq X_2 \implies \rho(X_1) \leq \rho(X_2)$, $X_1, X_2 \in L^0$ (monotonicity),
	\item $\rho(X+c) = \rho(X) + c$, $X \in L^0, c \in \mathbb{R}$ (cash-additivity).
\end{enumerate}
We highlight that the definition does not assume sub-additivity in order to include all essential quantile-based exposure measures, such as the Potential Future Exposure, which is neither a coherent \cite{artzner1999} nor a convex \cite{follmer2002} risk measure. 
We note that vast majority of exposure measures encountered in practice is law-invariant, hence can be seen as maps of distribution functions, $\rho(X_t) = \rho(F_{X_t})$, see \cite[Remark~2.1]{weber2006}.
Let $Q_X(\cdot) = \inf	\left\{x \in \mathbb{R} : F_X(x) \geq \cdot \right\}$ be the quantile function associated to a random variable $X$ with the distribution function $F_X$. Examples of key exposure measures for confidence level $\alpha \in (0,1)$ are
    \begin{center}
    \begin{tabular}{ll}
        Potential Future Exposure (PFE) & $\operatorname{PFE}_\alpha(X) = Q_X(\alpha)$\\
        Credit Expected Shortfall (CES) & $\operatorname{CES}_\alpha(X) = \frac{1}{1-\alpha}\int_\alpha^1 Q_X(u) du$\\
        Expected Exposure (EE) & $\operatorname{EE}(X) = \mathbf{E}[X].$\\%\hline
    \end{tabular}
    \end{center}
    
Since the exact distribution is not known a priori, exposure estimation typically is based on Monte Carlo (MC) simulation, an approach known as \emph{full re-evaluation}. Therein, each derivative is re-priced  $x_t^i=V_t(z_t^i)^+$ for a large number $n\geq 1$ of simulated market scenarios $\{z_t^i\}_{i=1}^n$ and an \emph{empirical exposure estimator} $\widehat{\rho}(\mathbf{x}_t) = \rho(F_{\mathbf{x}_t})$ is computed from the empirical distribution $F_{\mathbf{x}_t}(x) = \frac{1}{n} \sum_{i=1}^n \mathbbm{1}_{\{x\geq x_t^i\}}$ with $\mathbf{x}_t = \{x_t^i\}_{i=1}^n$, see Appendix \ref{sec:fr}.

The highly flexible full re-evaluation method is conceptually simple, and lies on strong foundations backed by decades of extensive research in both academia and industry. This also means that a major financial institution already has the well-established simulation and pricing engines deployed in production. Unfortunately, the appealing properties of
a full re-evaluation come with prohibitive computational costs
as many hard-to-value financial derivatives
need to be considered simultaneously.

A variety of techniques have been proposed to accelerate the full re-evaluation. For example, \cite{glau2019fast,wunderlich2023,GRZELAK2022,karlsson2016,Schftner2008OnTE} employ function approximations techniques with promising results. Therefore, we believe it is worthwhile to gain a deeper understanding of accelerating full re-evaluation with a large class of function approximations.
In order to rigorously assess the effectiveness and distil preferable approximation methods, we formalize and analyse the approach in a generic framework. We deliberately only assume the availability of black-box evaluations of pricers, thus focusing on flexible acceleration methods with low implementation burden and hence high attractiveness for practice.

We propose an approach consisting of replacing the computationally expensive price evaluation $X_t=V_t(Z_t)$ with an efficient alternative $Y_t=U_t(Z_t)$ where $U_t \approx V_t$.
A typical choice for the function approximation is based on nodal evaluations of the reference pricer $V_t$. The advantage compared to full re-evaluation is that only very few nodal points $\{\zeta_t^j\}_{j=1}^N,$ $N \ll n $,
are required to specify the approximating function $U_t$ via approximation operator $\mathbf{I}$, namely, $U_t = \mathbf{I}(\{V_t(\zeta_t^j)\}_{j=1}^N)$. In Figure \ref{fig:process_flow} we give a graphical overview of the procedure.

\begin{figure}[htb!]
\begin{center}
\tikzstyle{block} = [rectangle, draw, fill=blue!20, 
    text width=6em, text centered, rounded corners, minimum height=4em]
\tikzstyle{line} = [draw, -latex']
\tikzstyle{solidline} = [draw, -latex']

\begin{tikzpicture}[node distance = 2cm, auto]
    \node [block] (simulation) {\textbf{Simulation}\\$z_t^i \sim Z_t$};
    \node [block, below of=simulation, node distance=2.5cm] (pricing) {\textbf{Pricing}\\$z_t \mapsto V_t(z_t) $};
    \node [block, below of=pricing, node distance=2.5cm] (measurement) {\textbf{Exposure measurement}};
    \node [block, right of=pricing, node distance=5.5cm, text width=10em, minimum height=5em] (approximation) {\textbf{Approximation}\\$U_t =\mathbf{I}(\{V_t(\zeta_t^j)\}_{j=1}^N) \approx V_t$\\$z_t \mapsto U_t(z_t)$};
    \path [draw, dashed, -latex'] (simulation) -- node[midway, left] {$\left\{z_t^i\right\}_{i=1}^n$} (pricing);
    \path [draw, dashed, -latex'] (pricing) -- node[midway, left] {$\mathbf{x}_t=\left\{V_t(z^i_t)^+\right\}_{i=1}^n$} (measurement);
    \path [line] (simulation) -- node[midway, above right] {$\left\{z_t^i\right\}_{i=1}^n$} (approximation);
    \path [line] (pricing) -- node[midway, above] {$\{V_t(\zeta_t^j)\}_{j=1}^N$} (approximation);
    \path [line, line width=0.1mm] (pricing) -- node[midway, below] {$N \ll n$} (approximation);
    \path [line] (approximation) -- node[midway, below right] {$\mathbf{y}_t=\left\{U_t(z^i_t)^+\right\}_{i=1}^n$} (measurement);
    \draw [dashed, -latex'] (measurement) -- +(0,-1.5) node[below] {$\widehat{\rho}(\mathbf{x_t})\approx \rho(V_t(Z_t)^+)$};
    \draw [solidline] (measurement) -- +(4,-1.5) node[below] {$\widehat{\rho}(\mathbf{y_t})\approx \rho(U_t(Z_t)^+)$};
\end{tikzpicture}
\end{center}
\vspace{-0.5cm}
\caption{The scheme shows how function approximation is employed to accelerate exposure measurement. Dashed arrows represent the standard full re-evaluation, while solid arrows highlight the steps of the accelerated calculation. }
    \label{fig:process_flow}
\end{figure}
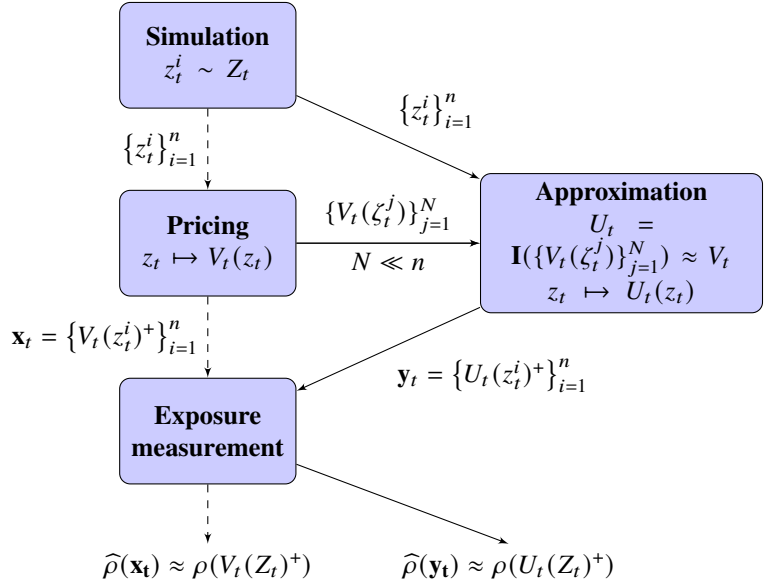

Key to our analysis is assessing the total simulation error $|\rho(X_t)-\hat{\rho}(\mathbf{y}_t)|$ where $\mathbf{y}_t$ is the data sample obtained analogously to $\mathbf{x}_t$ when $V_t$ is replaced by $U_t$. First note
\begin{equation}\label{eq:err_split_new}
\left|\rho(X_t) -  \widehat{\rho}(\mathbf{y}_t) \right|  \leq \left|\rho(X_t) -  \widehat{\rho}(\mathbf{x}_t) \right| + \left|\widehat{\rho}(\mathbf{x}_t) - \widehat{\rho}(\mathbf{y}_t)\right| .
\end{equation} 
The first term is the standard simulation error, which is equal to the error of the full re-evaluation and can be expected to decay with $1/\sqrt{n}$, which we verify for EE, PFE and CES in Appendix \ref{sec:exp_mea_exa}. Considering the second term, for any law-invariant exposure measure $\rho$, and for bounded risk factors $Z_t$, and $V_t$ and $U_t$ sufficiently regular so that $X_t,Y_t\in L^\infty$, we have
\begin{equation}
    \left|\widehat{\rho}(\mathbf{x}_t) - \widehat{\rho}(\mathbf{y}_t)\right| = \left|\rho(F_{\mathbf{x}_t}) - \rho(F_{\mathbf{y}_t})\right| \leq \Vert Q_{\mathbf{x}_t} - Q_{\mathbf{y}_t} \Vert_\infty
\end{equation}
thanks to the 1-Lipschitzness of law-invariant exposure measures as maps on the space $\mathcal{M}_{1,c}(\mathbb{R})$ of the probability measures compactly supported in $\mathbb{R}$ with respect to the $\infty$-Wasserstein metric $W_\infty(F_X,F_Y) = \sup_{0<u<1}|Q_X(u)-Q_Y(u)|$, as shown in \cite[Lemma~2.1,~Remark~2.2]{weber2004} together with \cite[Lemma~2.3.2]{weber2006} based on \cite[Lemma~4.2]{follmer2002}.

Inserting $Q_{\mathbf{x}}(u) = x^{\left(\lfloor nu \rfloor + 1\right)}$, where $x^{(k)}$ is $k$-th smallest element of the set $\mathbf{x}=\{x^i\}_{i=1}^n$ yields $\Vert Q_{\mathbf{x}_t} - Q_{\mathbf{y}_t} \Vert_\infty = \sup_{0<u<1}|x_t^{(\lfloor nu \rfloor + 1)} - y_t^{(\lfloor nu \rfloor + 1)}|$.
Then, employing Lemma \ref{lem:ord_diffs},
we obtain
    \begin{equation*}
        \Vert Q_{\mathbf{x}_t} - Q_{\mathbf{y}_t} \Vert_\infty = \max_{i=1,...,n} \left|x_t^{(i)}-y_t^{(i)}\right| \leq \Vert \mathbf{x}_t - \mathbf{y}_t \Vert_\infty \leq \Vert V_t^+ - U_t^+ \Vert_{\infty} \leq \Vert V_t - U_t \Vert_{\infty}.
    \end{equation*}
Thus we have $\left|\widehat{\rho}(\mathbf{x}_t) - \widehat{\rho}(\mathbf{y}_t)\right| \leq \Vert V_t - U_t \Vert_{\infty}$, and hence the following proposition.
\begin{proposition}\label{prop:uniform_bound_11}
    Let $Z_t$ be risk factors with compactly supported distribution and $V_t$ sufficiently regular so that $X_t\in L^\infty$. For any law-invariant exposure measure $\rho$ on $\mathcal{M}_{1,c}(\mathbb{R})$,  we have
    \begin{equation}\label{eq:err_split}
\left|\rho(X_t) -  \widehat{\rho}(\mathbf{y}_t) \right|  \leq \left|\rho(X_t) -  \widehat{\rho}(\mathbf{x}_t) \right| + \Vert V_t - U_t \Vert_{\infty}.
\end{equation} 
\end{proposition}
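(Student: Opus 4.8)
The plan is to assemble into a self-contained argument the chain of estimates already displayed in the paragraphs preceding the statement. First I would apply the triangle inequality \eqref{eq:err_split_new} to split the total error $|\rho(X_t) - \widehat{\rho}(\mathbf{y}_t)|$ into the pure simulation error $|\rho(X_t) - \widehat{\rho}(\mathbf{x}_t)|$, which is kept untouched, and the approximation-induced discrepancy $|\widehat{\rho}(\mathbf{x}_t) - \widehat{\rho}(\mathbf{y}_t)|$, which is the object of interest. All the work then goes into showing $|\widehat{\rho}(\mathbf{x}_t) - \widehat{\rho}(\mathbf{y}_t)| \leq \Vert V_t - U_t \Vert_\infty$.

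For that discrepancy, the key step is to use law-invariance to write $\widehat{\rho}(\mathbf{x}_t) = \rho(F_{\mathbf{x}_t})$ and $\widehat{\rho}(\mathbf{y}_t) = \rho(F_{\mathbf{y}_t})$, and then invoke the fact that a law-invariant exposure measure (monotone and cash-additive) is $1$-Lipschitz on $\mathcal{M}_{1,c}(\mathbb{R})$ with respect to the $\infty$-Wasserstein distance $W_\infty(F_X,F_Y) = \sup_{0<u<1}|Q_X(u) - Q_Y(u)|$. The empirical distributions $F_{\mathbf{x}_t}$ and $F_{\mathbf{y}_t}$ are finitely, hence compactly, supported, so they lie in $\mathcal{M}_{1,c}(\mathbb{R})$ and the estimate applies, giving $|\widehat{\rho}(\mathbf{x}_t) - \widehat{\rho}(\mathbf{y}_t)| \leq \Vert Q_{\mathbf{x}_t} - Q_{\mathbf{y}_t} \Vert_\infty$. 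This Lipschitz property is the one external ingredient, and I would simply cite it, combining \cite[Lemma~2.1,~Remark~2.2]{weber2004}, \cite[Lemma~2.3.2]{weber2006} and \cite[Lemma~4.2]{follmer2002}.

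Next I would make the empirical quantile functions explicit through order statistics, $Q_{\mathbf{x}}(u) = x^{(\lfloor nu \rfloor + 1)}$ for $u\in(0,1)$, so that $\Vert Q_{\mathbf{x}_t} - Q_{\mathbf{y}_t} \Vert_\infty = \max_{i=1,\dots,n}|x_t^{(i)} - y_t^{(i)}|$. Applying Lemma~\ref{lem:ord_diffs} passes from sorted to unsorted samples, $\max_i |x_t^{(i)} - y_t^{(i)}| \leq \Vert \mathbf{x}_t - \mathbf{y}_t \Vert_\infty$. Finally, since $x_t^i = V_t(z_t^i)^+$ and $y_t^i = U_t(z_t^i)^+$ with $z_t^i \in \operatorname{supp}(Z_t)$, and $a \mapsto a^+$ is $1$-Lipschitz, one gets $\Vert \mathbf{x}_t - \mathbf{y}_t \Vert_\infty \leq \Vert V_t^+ - U_t^+ \Vert_\infty \leq \Vert V_t - U_t \Vert_\infty$. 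Chaining these inequalities and substituting back into \eqref{eq:err_split_new} yields \eqref{eq:err_split}.

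The only genuinely delicate point is the $W_\infty$-Lipschitz property, and in particular checking that the cited results cover the present setting, in which $\rho$ is merely monotone and cash-additive rather than convex: the \emph{main obstacle} is ensuring that these two axioms alone already force $|\rho(X) - \rho(Y)| \leq \Vert X - Y \Vert_\infty$ on $L^\infty$, which the law-invariant refinement then sharpens by replacing $\Vert X - Y \Vert_\infty$ with $W_\infty(F_X,F_Y)$. Everything else — the order-statistic identities and the $1$-Lipschitzness of the positive part — is routine bookkeeping.
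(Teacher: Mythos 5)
Your proposal is correct and follows essentially the same route as the paper: the triangle-inequality split \eqref{eq:err_split_new}, the $W_\infty$-Lipschitz property of law-invariant exposure measures on $\mathcal{M}_{1,c}(\mathbb{R})$ cited from Weber and F\"ollmer--Schied, the order-statistic representation of the empirical quantile functions, Lemma~\ref{lem:ord_diffs} to pass from sorted to unsorted samples, and the $1$-Lipschitzness of $a\mapsto a^+$. The paper presents exactly this chain in the paragraphs preceding the statement, so there is nothing to add.
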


Next, we explore the analogue of the error splitting \eqref{eq:err_split_new}, namely,
\begin{equation}\label{eq:total_err_approx_new}
\left|\rho(X_t) -  \widehat{\rho}(\mathbf{y}_t) \right| \leq | \rho(X_t) - \rho(Y_t) | + | \rho(Y_t) - \widehat{\rho}(\mathbf{y}_t) |.
\end{equation}
By definition, any exposure measure $\rho$ is monotone and cash-additive and thus 1-Lipschitz with respect to the uniform norm on the subspace of bounded random variables, see \cite[Lemma~4.3]{follmer2008}. To see that this holds more generally, only assuming $\Vert X - Y \Vert_{\infty}<\infty$, first note that $X \leq Y + \Vert X - Y \Vert_{\infty}$ for $X, Y \in L^0$. Now, thanks to the monotonicity and cash-additivity we obtain $\rho(X) \leq \rho(Y) + \Vert X - Y \Vert_{\infty}$. The analogous calculation with the reversed roles of $X$ and $Y$ yields
\begin{equation*}
|\rho(X) - \rho(Y)| \leq \Vert X - Y \Vert_{\infty}.
\end{equation*}
We obtain $\Vert X_t - Y_t \Vert_{\infty}  =  \sup_{\omega \in \Omega} | V_t(Z_t(\omega)) - U_t(Z_t(\omega)) |\leq \Vert U_t - V_t \Vert_{\infty}$. If $\Vert U_t - V_t \Vert_{\infty} < \infty$, we have $|\rho(X_t) - \rho(Y_t) |_{\infty} \leq \Vert V_t - U_t \Vert_{\infty}$, and hence the following proposition, whose claim is trivial if $\Vert U_t - V_t \Vert_{\infty} = \infty$.
\begin{proposition}\label{prop:uniform_bound_12}
For exposures $X_t = V_t(Z_t)^+$ and $Y_t = U_t(Z_t)^+$, and any exposure measure $\rho$ we have
\begin{equation}\label{eq:err_split123}
\left|\rho(X_t) -  \widehat{\rho}(\mathbf{y}_t) \right|  \leq  \Vert V_t - U_t \Vert_{\infty}+\left|\rho(Y_t) -  \widehat{\rho}(\mathbf{y}_t) \right|.
\end{equation}
\end{proposition}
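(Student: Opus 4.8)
The plan is to use the triangle-inequality splitting already recorded in \eqref{eq:total_err_approx_new},
\begin{equation*}
\left|\rho(X_t) -  \widehat{\rho}(\mathbf{y}_t) \right| \leq | \rho(X_t) - \rho(Y_t) | + | \rho(Y_t) - \widehat{\rho}(\mathbf{y}_t) |,
\end{equation*}
and to dominate the first term on the right by $\Vert V_t - U_t \Vert_{\infty}$ while leaving the second untouched. This is exactly the programme sketched in the paragraph preceding the statement, so the proof amounts to assembling those observations carefully.

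First I would dispose of the degenerate case $\Vert V_t - U_t \Vert_{\infty} = \infty$: here the claimed bound holds trivially, since its right-hand side is $+\infty$ (the term $\left|\rho(Y_t) -  \widehat{\rho}(\mathbf{y}_t) \right|$ cannot be $-\infty$ because $\widehat{\rho}(\mathbf{y}_t)\in\mathbb{R}$ and $\rho(Y_t)\in\mathbb{R}\cup\{\infty\}$). So assume $\Vert V_t - U_t \Vert_{\infty} < \infty$. Next I would record the Lipschitz estimate $|\rho(X_t) - \rho(Y_t)| \leq \Vert X_t - Y_t \Vert_{\infty}$ valid for every exposure measure $\rho$: from $X_t \leq Y_t + \Vert X_t - Y_t \Vert_{\infty}$ (a pointwise inequality that makes sense precisely because the uniform distance is finite), monotonicity gives $\rho(X_t) \leq \rho(Y_t + \Vert X_t - Y_t \Vert_{\infty})$, and cash-additivity turns the right-hand side into $\rho(Y_t) + \Vert X_t - Y_t \Vert_{\infty}$; exchanging the roles of $X_t$ and $Y_t$ yields the two-sided bound. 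This is the extension of \cite[Lemma~4.3]{follmer2008} beyond $L^\infty$ noted in the text.

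Finally I would bound $\Vert X_t - Y_t \Vert_{\infty}$ by $\Vert V_t - U_t \Vert_{\infty}$. Since $X_t = V_t(Z_t)^+$ and $Y_t = U_t(Z_t)^+$, for each $\omega\in\Omega$ the $1$-Lipschitzness of the positive part, $|a^+-b^+|\leq|a-b|$, gives $|V_t(Z_t(\omega))^+ - U_t(Z_t(\omega))^+| \leq |V_t(Z_t(\omega)) - U_t(Z_t(\omega))| \leq \Vert V_t - U_t \Vert_{\infty}$, and taking the essential supremum over $\omega$ yields $\Vert X_t - Y_t \Vert_{\infty} \leq \Vert V_t - U_t \Vert_{\infty}$. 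Chaining this with the Lipschitz estimate and with \eqref{eq:total_err_approx_new} produces \eqref{eq:err_split123}.

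There is no substantial obstacle here, as all the analytic ingredients already appear in the preceding discussion; the only point that requires mild care is the bookkeeping around $\rho$ taking values in $\mathbb{R}\cup\{\infty\}$ and its interaction with the case $\Vert V_t - U_t \Vert_{\infty}=\infty$, which is why I would isolate that case at the very start. Compared with Proposition~\ref{prop:uniform_bound_11}, the gain is that no law-invariance and no compact support of $Z_t$ are needed, at the price of splitting the error at $\rho(Y_t)$ rather than at $\widehat{\rho}(\mathbf{x}_t)$.
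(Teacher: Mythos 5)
Your proof is correct and follows essentially the same route as the paper: the split \eqref{eq:total_err_approx_new}, the $1$-Lipschitz property of $\rho$ derived from monotonicity and cash-additivity under the finiteness assumption, and the bound $\Vert X_t-Y_t\Vert_\infty\leq\Vert V_t-U_t\Vert_\infty$. If anything you are slightly more careful than the text, which writes $\Vert X_t-Y_t\Vert_\infty=\sup_\omega|V_t(Z_t(\omega))-U_t(Z_t(\omega))|$ without the positive parts, whereas you correctly pass through $|a^+-b^+|\leq|a-b|$; the degenerate case $\Vert V_t-U_t\Vert_\infty=\infty$ is likewise dispatched the same way.
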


Propositions \ref{prop:uniform_bound_11} and \ref{prop:uniform_bound_12} showcase decompositions of the overall error into a MC error and the acceleration error due to function approximation, which then can be analysed further independently. Decomposition \eqref{eq:err_split} is based on the control of the MC-error of full re-evaluation. 
The advantage of decomposition \eqref{eq:err_split123} is that we can estimate the MC error purely based on $Y_t = U_t(Z_t)^+$,  where $U_t$ is chosen to be quick to evaluate. This can serve as a basis for adaptive algorithms optimally choosing the function approximation accuracy and the number of simulations.

Notice that in both propositions the acceleration error is bounded by the function approximation error in the uniform norm.  In order to exploit this structural insight, a highly efficient approximation of the value function in the uniform is required.
One particular method stands out for this task: Chebyshev interpolation, which is designed to yield a highly efficient approximation in the uniform norm that is strikingly fast to construct and to evaluate. It has already been shown to exhibit excellent approximation properties in theory and practice for option pricing since price functions are typically differentiable or even analytic and Chebyshev interpolation converges uniformly of polynomial order for differentiable functions, and exponentially fast for analytic functions.

Let $I_N$ denote Chebyshev interpolation of degree $N\geq 1$ and $\mathcal{E}(\left[a, b\right], \varrho)$ the Bernstein ellipse with parameter $\varrho > 1$ (see Appendix \ref{sec:cheby_acc}).
Assuming the simulation-based approach allows for a bounded interpolation domains we obtain:

\begin{corollary}\label{cor:cheb_conv}
Let $V_t : \left[a,b\right] \to \mathbb{R}$ be a derivative value function and denote by $U_t = I_N(V_t)$ its Chebyshev interpolation of degree $N$, and $\mathbf{x}_t$ and $\mathbf{y}_t$ the realized exposures induced by $V_t$ and $U_t$. Let $\rho$ be a law-invariant exposure measure.
\begin{enumerate}
	\item If $V_t$ can be analytically extended to a generalized Bernstein ellipse $\mathcal{E}_{\varrho} = \mathcal{E}([a, b], \varrho)$ for $\varrho>1$, then for  $C =\sup_{z \in \mathcal{E}_{\varrho}} |V_t(z)|$
	\begin{equation}\label{eq:cor_err_bound}
	\left|\widehat{\rho}(\mathbf{x}_t) -  \widehat{\rho}(\mathbf{y}_t) \right| \leq 4C\frac{\varrho^{-N}}{\varrho-1}.
	\end{equation}
	\item If $V_t$ and its derivatives up to $V_t^{(p-1)}$ are absolutely continuous and $V_t^{(p)}$ is of bounded variation $C$, for $1 \leq p < N$, then
	\begin{equation*}
	\left|\widehat{\rho}(\mathbf{x}_t) -  \widehat{\rho}(\mathbf{y}_t) \right| \leq \frac{4C}{\pi p} \left(\frac{b - a}{2(N-p)}\right)^p.
	\end{equation*}
\end{enumerate}
\end{corollary}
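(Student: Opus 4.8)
The plan is to combine Proposition~\ref{prop:uniform_bound_11} with the standard Chebyshev interpolation error bounds. The decisive observation is that the acceleration error in \eqref{eq:err_split} is $\bigl|\widehat{\rho}(\mathbf{x}_t) - \widehat{\rho}(\mathbf{y}_t)\bigr| \leq \Vert V_t - U_t \Vert_{\infty}$, and with $U_t = I_N(V_t)$ this is precisely the uniform interpolation error on $[a,b]$. So the whole corollary reduces to quoting the two classical convergence theorems for Chebyshev interpolation on an interval --- exponential convergence for functions analytic on a Bernstein ellipse, and algebraic convergence of order $p$ for functions whose $p$-th derivative has bounded variation --- which are stated in Appendix~\ref{sec:cheby_acc}.

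Concretely, for part~1 I would invoke the bound $\Vert f - I_N(f)\Vert_\infty \leq \frac{4C\varrho^{-N}}{\varrho-1}$ valid for any $f$ that extends analytically to $\mathcal{E}([a,b],\varrho)$ with $C = \sup_{z\in\mathcal{E}_\varrho}|f(z)|$; applying it to $f = V_t$ and chaining through \eqref{eq:err_split} (dropping the MC term, since the statement concerns only $|\widehat{\rho}(\mathbf{x}_t) - \widehat{\rho}(\mathbf{y}_t)|$, which is controlled directly by the chain of inequalities preceding Proposition~\ref{prop:uniform_bound_11}) yields \eqref{eq:cor_err_bound}. For part~2 I would likewise quote the bound $\Vert f - I_N(f)\Vert_\infty \leq \frac{4C}{\pi p}\bigl(\frac{b-a}{2(N-p)}\bigr)^p$ for $f$ with $f,\dots,f^{(p-1)}$ absolutely continuous and $f^{(p)}$ of bounded variation $C$, and apply it to $V_t$. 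Both bounds are the Trefethen-style estimates (Theorems 8.1 and 8.2 in \emph{Approximation Theory and Approximation Practice}), rescaled from $[-1,1]$ to $[a,b]$; the rescaling is what produces the factor $(b-a)/2$ in part~2, while part~1 is scale-invariant once the ellipse is defined on $[a,b]$ directly.

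There is essentially no obstacle here beyond bookkeeping: the structural work was already done in establishing Proposition~\ref{prop:uniform_bound_11} and the inequality chain $\max_i|x_t^{(i)} - y_t^{(i)}| \leq \Vert V_t - U_t\Vert_\infty$. The only point requiring a line of care is that the uniform interpolation error must be measured on the \emph{same} domain $[a,b]$ on which the risk factors are supported, so that $\Vert V_t(Z_t) - U_t(Z_t)\Vert_\infty \leq \Vert V_t - U_t\Vert_{\infty,[a,b]} = \Vert V_t - U_t\Vert_\infty$; this is exactly the compact-support hypothesis on $Z_t$ in Proposition~\ref{prop:uniform_bound_11}, inherited here since $V_t$ is defined on $[a,b]$ and the simulation is assumed to produce scenarios in $[a,b]$. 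Thus the proof is: cite Proposition~\ref{prop:uniform_bound_11} (or rather the sharper intermediate bound $|\widehat{\rho}(\mathbf{x}_t) - \widehat{\rho}(\mathbf{y}_t)| \leq \Vert V_t - U_t\Vert_\infty$), then substitute the appropriate Chebyshev error estimate from Appendix~\ref{sec:cheby_acc} in each of the two regularity regimes.
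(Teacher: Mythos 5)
Your proposal is correct and follows essentially the same route as the paper: chain the inequality preceding Proposition~\ref{prop:uniform_bound_11}, namely $|\widehat{\rho}(\mathbf{x}_t)-\widehat{\rho}(\mathbf{y}_t)|\leq \Vert V_t-U_t\Vert_\infty$, with the two Trefethen interpolation error bounds applied to $V_t\circ\tau$ on $[-1,1]$ and rescaled to $[a,b]$. The only slip is a citation detail: the bounded-variation case is Theorem~7.2 in Trefethen (not 8.1), but you state its content correctly, so nothing of substance is missing.
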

\begin{proof}
Theorems 8.2 and 7.2 in \cite{trefethen2013} show the decay of the error of the interpolation on $[-1, 1]$ in the uniform norm. The claims follow by applying these theorems to $V_t \circ \tau$ for $\tau$ as in \eqref{eq:domain_transform} and then using the inequality before Proposition \ref{prop:uniform_bound_11}. See Appendices B and C in \cite{gass2018} to follow how the transformation $\tau$ modifies the constants.

\end{proof}

In the setting of Corollary \ref{cor:cheb_conv} also the derivatives of $V_t$, known as Greeks, are approximated well by the derivatives of the Chebyshev interpolant, see \cite[Theorem~21.1]{trefethen2013} for the analytic, and \cite[Lemma~3.1]{Canuto1982ApproximationRF} for the differentiable case. We demonstrate the value of fast and accurate Greeks in counterparty risk applications in Section \ref{sec:cheby_greeks}.

Proofs of analyticity of option prices are established in \cite{gass2018}, covering a variety of payoff profiles, such as those of European, digital, and barrier options, in different models, including the BSM, MJD, and HSV models; see also \cite{gass2016,glau2019fast,potz2020}. The second part of Corollary \ref{cor:cheb_conv} is still applicable when pricing functions are only differentiable up to some order. For instance, this it the case for American options in the BSM \cite{Peskir2006OptimalSA} and MJD \cite{bayraktar2009} models. The availability of the option's delta is typically  a requirement for a useful model, which shows that differentiability is a very mild assumption.

In the remainder of the article, we tackle the following questions.
1) Going beyond the given setting, a natural question is how to accommodate unbounded risk factors. The uniform norm of Chebyshev approximations is typically controlled on compact domains. The same holds true for, e.g., universal approximations by NNs, but typically the error outside the domain cannot be controlled in a uniform way. 2) To deepen the understanding of efficiency, we need to assess the interplay between the number of simulations, the approximation degree and the overall error. 3) Function approximations in the uniform norm prove highly suitable to accelerate exposure calculations. Besides possible multivariate extensions by highly efficient complexity reduction methods, $L^p$-based approximations could be of interest as well. This calls for a preliminary error analysis in terms of the $L^p$ norm. 4) To assess the main approach in terms of essential practical requirements, efficient algorithms and solid numerical experiments are required.

\subsection{Main contributions and outline of the paper}

In the next section, as our main theoretical result we derive a probabilistic finite-sample error bound under mild conditions that are fulfilled for Chebyshev interpolation, answering questions 1) and 2).
To address question 3), in Section \ref{sec:lp_analysis} we  analyse the acceleration of exposure calculations using $L^p$ approximations. We derive error bounds and discuss them in detail revealing the dependence of the error on the distribution of the risk factors and the structure of the exposure measure. Regarding question 4), in Section \ref{sec:num_exp} we assess the method numerically for models and derivative products of varying levels of sophistication, and a range of different exposure measure and counterparty risk tasks. We employ suitable reference pricers and assess speed-ups and maximum relative errors. In the most challenging case considered, encompassing early-exercise and stochastic volatility features, replacing the reference finite-difference pricer reduces the computational time from 11.5 hours to 15 minutes, with only small relative errors below $1.2\%$. The example indicates that the proposed acceleration method may well contribute to avoid oversimplification of risk computations. The related literature is discussed in Section \ref{sec:literature}. Section \ref{sec:conc} on the conclusion and Section \ref{sec:outlook} on the outlook complement the main part of the article. The appendix details the full re-evaluation and the proposed acceleration method, discusses examples of empirical estimators, provides deferred proofs as well as algorithms.

\section{Main result on the exposure acceleration error for uniform approximations}\label{sec:main_result}

We turn to the case where the uniform function approximation error is only controlled on a subdomain of the support of the risk factors.
This occurs naturally when computational methods based on finite domains, such as finite difference solvers, are employed, whereas the support of the risk factors is unbounded.
Here, we need to account for the effect of the truncation on exposure estimates. We approach this by deriving a probabilistic error bound, which allows us to leverage the probability of risk factors to lie inside the truncated domain. To this end, we first provide a suitable setting.

Let $n$ be the number of exposure samples used in the full re-evaluation and $\Omega_L\subset\mathbb{R}^D$  a finite approximation domain parametrized by a size parameter $L$. We construct the function approximation $U_t = \mathbf{I}_N V_t^M\approx V_t$ of degree $N$ on $\Omega_L$ based on $N$ function evaluations of $V_t^M$ at nodal points $\{\zeta_t^j\}_{j = 1}^N \subset \Omega_L$, where $V_{t}^{M}$ is a Monte-Carlo pricer  with $M$ paths. For $n$ and $M$ sufficiently large,
\begin{equation}
\rho(X_t) - \widehat{\rho}(\mathbf{X}_t)\sim \mathcal{N}\left(0, \frac{\sigma_{\rho}^2}{n} \right) \hspace{0.05cm}\text{and}\hspace{0.05cm}\ V_t(z) - V_{t}^{M}(z) \sim \mathcal{N}\left(0,\frac{\sigma_z^2}{M}\right), \sigma_{z} \leq \overline{\sigma} \tag{A1}\label{ass0}
\end{equation}
can be reasonably assumed, which we do. We also assume
\begin{equation}
\hspace{0.83cm}\mathbb{P}(Z_t \notin \Omega_L) \leq \alpha e^{-\beta L^{\gamma}} \hspace{0.15cm}\text{with}\hspace{0.15cm} \alpha, \beta, \gamma>0 \tag{A2}\label{ass1}
\end{equation}
on the distribution of the risk factors, and
\begin{equation}
\hspace{0.83cm}\Vert V_t - \mathbf{I}_N V_t \Vert_{\infty,\Omega_L}\leq a e^{b(L^\vartheta  - \sqrt[D]{N})} \hspace{0.15cm}\text{with}\hspace{0.15cm} a,b>0, \vartheta \geq 2 \tag{A3}\label{ass2}
\end{equation}
for the approximation scheme's convergence, with the following stability condition
\begin{equation}
\hspace{0.83cm}\Vert \mathbf{I}_N V_t - \mathbf{I}_N V_{t}^M \Vert_{\infty,\Omega_L} \leq \mathfrak{c}(1+\ln N)^D \max_{j = 1,..., N} |V_t(\zeta_t^j) - V_{t}^M(\zeta_t^j)|  \tag{A4}\label{ass3}
\end{equation}
with $\mathfrak{c}>0$. Lastly, we make the following
\paragraph{Assumption \normalfont{(A5)}.} The computational effort to construct the approximation $\mathbf{I}_N V_t^M$ scales linearly  with $N^\xi$ for some $\xi>0$, the effort to evaluate it at one point scales linearly with $N$ and the effort to calculate $\widehat{\rho}(\mathbf{x})$ for $\mathbf{x} = (x^i)_{i=1}^n$ with $n\ln n$.

\begin{theorem}\label{thm:probErrBnd}
Under assumptions (A1)-(A5), for any $\kappa >0$, we can choose the approximation degree $N$, the domain size $L$, and the number of pricing paths $M$, such that
\begin{equation}\label{eq:peb}
\mathbb{P}\left(\left\vert \rho(X_t) - \widehat{\rho}(\mathbf{Y}_t)\right\vert \leq \frac{\kappa}{\sqrt{n}} \right) \geq 1-3e^{-c\kappa^2}
\end{equation}
for some $c>0$, and the computational effort to construct the function approximation and evaluate $\widehat{\rho}(\mathbf{Y}_t)$ scales linearly with a factor $\Lambda(n) \leq n^{1+\varepsilon}$ for any $\varepsilon>0$ and sufficiently large $n$. 
\end{theorem}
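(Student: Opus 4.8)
The plan is to start from the error decomposition that the exposure acceleration error is a sum of three effects and control each by a suitable choice of the free parameters $N$, $L$, $M$. Concretely, I would write
\begin{equation*}
\left\vert \rho(X_t) - \widehat{\rho}(\mathbf{Y}_t)\right\vert \leq \left\vert \rho(X_t) - \widehat{\rho}(\mathbf{X}_t)\right\vert + \left\vert \widehat{\rho}(\mathbf{X}_t) - \widehat{\rho}(\mathbf{Y}_t)\right\vert,
\end{equation*}
where the first term is the pure Monte-Carlo error of the full re-evaluation, governed by \eqref{ass0}, and is already of order $1/\sqrt{n}$ with Gaussian tails. For the second term I would invoke the $1$-Lipschitz property of law-invariant exposure measures used before Proposition \ref{prop:uniform_bound_11}, which reduces it to $\Vert V_t^+ - U_t^+\Vert_{\infty,\Omega_L}$ on the truncated domain plus a correction on $\{Z_t \notin \Omega_L\}$. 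On $\Omega_L$ the triangle inequality $\Vert V_t - \mathbf{I}_N V_t^M\Vert_{\infty,\Omega_L} \leq \Vert V_t - \mathbf{I}_N V_t\Vert_{\infty,\Omega_L} + \Vert \mathbf{I}_N V_t - \mathbf{I}_N V_t^M\Vert_{\infty,\Omega_L}$ combined with (A3) and (A4) controls the deterministic interpolation error and the propagated Monte-Carlo pricing error, the latter being of order $(1+\ln N)^D \overline{\sigma}/\sqrt{M}$ up to Gaussian tails via a union bound over the $N$ nodal points.

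The heart of the argument is then a careful calibration of the three parameters against $n$. I would parametrize, for instance, $L = (\theta \ln n)^{1/\gamma}$ so that by (A2) the truncation probability $\mathbb{P}(Z_t \notin \Omega_L) \leq \alpha n^{-\theta\beta}$ becomes polynomially small; this lets the truncation event contribute negligibly to the probability bound (absorbed into one of the three exponential terms) while the event $\{Z_t \in \Omega_L\}$ carries the main estimate. With $L$ of order $(\ln n)^{1/\gamma}$, assumption (A3) requires $\sqrt[D]{N} \gtrsim L^\vartheta = (\ln n)^{\vartheta/\gamma}$ to push the interpolation error below $c'/\sqrt{n}$, hence $N$ of order $(\ln n)^{D\vartheta/\gamma}$ — polylogarithmic in $n$, which is crucial. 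Finally I would take $M$ of order $n (1+\ln N)^{2D} = n\,\mathrm{polylog}(n)$ so that by (A4) the propagated pricing error is $O(1/\sqrt{n})$ with the right Gaussian tail. Collecting terms, each of the three contributions is bounded by a constant multiple of $\kappa/\sqrt{n}$ with failure probability at most $e^{-c\kappa^2}$, and a union bound yields the $1 - 3e^{-c\kappa^2}$ in \eqref{eq:peb}.

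For the complexity claim I would use (A5): constructing $\mathbf{I}_N V_t^M$ costs $N^\xi$ times the cost of one Monte-Carlo price with $M$ paths (order $M$), i.e.\ $N^\xi M \sim (\ln n)^{D\vartheta\xi/\gamma}\cdot n\,\mathrm{polylog}(n) = n\cdot\mathrm{polylog}(n)$; evaluating $U_t$ at the $n$ simulation points costs $nN \sim n\,\mathrm{polylog}(n)$; and computing $\widehat{\rho}(\mathbf{Y}_t)$ costs $n\ln n$. Since any fixed power of $\ln n$ is $o(n^\varepsilon)$, the total scales with $\Lambda(n) \leq n^{1+\varepsilon}$ for every $\varepsilon > 0$ and $n$ large, as claimed.

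The main obstacle I anticipate is bookkeeping the interaction between $N$, $L$ and $M$ so that all three error pieces end up genuinely of the same order $\kappa/\sqrt{n}$ with clean exponential tails — in particular making sure the exponent $\vartheta \geq 2$ in (A3) is exactly what is needed for the interpolation error to beat $1/\sqrt{n}$ once $L$ is only logarithmically large, and checking that the union bound over the $N$ nodal Monte-Carlo prices in (A4) costs only a $\ln N$ factor (hence polylog, hence harmless) rather than spoiling the Gaussian tail. A secondary subtlety is handling the truncation rigorously: on the rare event $\{Z_t \notin \Omega_L\}$ the approximation $U_t$ need not be close to $V_t$ at all, so one must argue that the order-statistic/quantile comparison degrades by at most the proportion of out-of-domain samples, which concentrates around $\mathbb{P}(Z_t \notin \Omega_L)$ and is polynomially small by the choice of $L$.
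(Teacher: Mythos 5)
Your proposal follows essentially the same route as the paper's proof: the same error decomposition via Proposition \ref{prop:uniform_bound_11}, the same triangle inequality splitting $\Vert V_t - \mathbf{I}_N V_t^M\Vert_{\infty,\Omega_L}$ into the deterministic interpolation error controlled by (A3) and the propagated nodal pricing noise controlled by (A4) with a union bound over the $N$ nodes, the same calibration $L\sim(\ln n)^{1/\gamma}$, $N=\mathrm{polylog}(n)$, $M\sim n\cdot\mathrm{polylog}(n)$, and the same complexity accounting under (A5). One clarification on the truncation subtlety you flag at the end: the paper does \emph{not} argue that the quantile comparison degrades in proportion to the fraction of out-of-domain samples --- that route would fail for the $\sup$-of-quantile-differences bound, since a single sample at which $U_t$ is arbitrarily wrong can move an order statistic arbitrarily. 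Instead it takes the event that \emph{all} $n$ samples lie in $\Omega_L$ as one of the three "good" events, bounds its failure probability by $n\,\alpha e^{-\beta L^\gamma}$, and builds a $\ln n/\beta$ term into $L^\gamma$ so that this failure probability equals $e^{-c\kappa^2}$; this is precisely the "absorbed into one of the three exponential terms" option you already describe in your second paragraph, and it is the one you should commit to rather than the proportional-degradation argument.
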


\begin{proof}
In Appendix \ref{app:probErrBoundProof}.
\end{proof}

Theorem \ref{thm:probErrBnd} enables us to discuss the speed-up of the proposed method compared to the standard full re-evaluation.
To this aim, we assess the standard full re-valuation in equivalent terms.
Ignoring the numerical pricer error, (A1) yields $\mathbb{P}\left(\left\vert\rho(X_t)- \widehat{\rho}(\mathbf{X}_t)\right\vert \leq \kappa/\sqrt{n}\right)\geq 1 - e^{-c\kappa^2}$ with the same constant $c$ as in \eqref{eq:peb}.
To quantify the computational effort, we choose $M = \sqrt{n}$ as the fixed number of pricing paths which optimally balances the computational cost of the inner and outer simulations, see \cite{broadie2011,Gordy2010}. This results in the computational cost of the full re-evaluation being proportional to $n\sqrt{n}$. Thus, for any $\varepsilon>0$, the acceleration achieves an asymptotic speed-up factor of
$ n^{1/2-\varepsilon}$.

It is interesting to note that this assertion holds true for any dimension $D \geq 1$. This might be surprising since we allow for function approximations that suffer from the curse of dimensionality as in (A3). For further insight, from the proof we see that $\Lambda(n) = n(\ln{n})^{D\max\{1, 2/\gamma\}}(\ln^2{n})^{2D+1} $. While the term is exponential in $D$ for a fixed~$n$, the effort is dominated by $n^{1+\varepsilon}$ for $D$ fixed and $n$ large enough, for any $\varepsilon>0$, i.e., arbitrarily close to only a linear growth.

While (A1) is a standard assumption on MC simulation, let us discuss the scope of assumptions (A2)-(A5) in more detail.

Assumption (A2) on tail probabilities is fulfilled in the uni- and multivariate BSM models for $\gamma = 2$ on a hyperrectangle $\Omega_L$ with the length of the largest side equal to $L$. As $\gamma$ can be chosen smaller, models with heavier tails are included as well. 

Pricing functions are analytic for a large set of models and options in which case they allow for an approximation with convergence exponential in $N^{1/D}$ as in (A3), see \cite{gass2016,gass2018}. Prominent examples of such approximations are the uni- and multivariate Chebyshev interpolation and the deep neural network approximation from \cite{Herrmann2022} which also fulfil the stability condition (A4) and the assumption on the computational effort (A5). For these methods, the assumption on the entailed dependence on the domain size as in (A3) holds for a wide class of options and models, see Appendix \ref{sec:ass_a3}.

Finally, (A5) assumes that the evaluation of $\widehat{\rho}(\mathbf{x})$ given $\mathbf{x} = (x^i)_{i=1}^n$ at most scales linearly with $n\ln n$, which is true for all quantile-based exposure measures, see the discussion following equation \eqref{eq:rho_m_emp_est}.

\section{Least-squares and $L^p$ approximations}\label{sec:lp_analysis}

To facilitate the analysis we observe that the following class of law-invariant exposure measures is monotone and cash-additive,
\begin{equation}\label{eq:qbrm_new}
\rho_m(X) = \int_0^1 Q_X(u)\,m(du),
\end{equation}
where \(m\) is a probability measure on \((0,1)\) with the density $f_m$. The class includes a large family of exposure measures for counterparty credit risk management, see Appendix \ref{sec:exp_mea_exa} for more details.

We have
\begin{equation}\label{eq:qbrm_bound}
|\rho_m(X_t) - \rho_m(Y_t)| \leq \int_0^1 |Q_{X_t}(u)-Q_{Y_t}(u)|m(du),
\end{equation}
and see that the error induced by the acceleration depends on the closeness of the quantile functions and the probability measure $m$ of the exposure measure $\rho_m$.
By the H\"{o}lder inequality, for $1 \leq r, r' \leq \infty$, $\frac{1}{r}+ \frac{1}{r'} = 1 $ we have the following bound
\begin{equation}\label{eq:qbrm_bound_holder_new}
|\rho_m(X_t) - \rho_m(Y_t)| \leq \Vert Q_{X_t} - Q_{Y_t}\Vert_{L^r(0,1)} \cdot \Vert f_m \Vert_{L^{r'}(0,1)}.
\end{equation}
To connect the difference of two quantile functions with the function approximation error, we employ an appropriate characterization of $p$-Wasserstein distance and so obtain the following proposition.

\begin{proposition}\label{prop:lp_bound}
Let $X_t = V_t(Z_t)^+$ and $Y_t = U_t(Z_t)^+$ be the exposures induced by the value function $V_t$, driven by the risk factors $Z_t$, and its approximation $U_t \approx V_t$. For quantile-based exposure measures $\rho_m$ as in \eqref{eq:qbrm_new} we have
\begin{equation}\label{eq:lp_bound}
		|\rho_m(X_t) - \rho_m(Y_t)| \leq \Vert V_t - U_t \Vert_{L^p(\mathbb{R}^D)} \cdot \Vert f_{Z_t} \Vert_{L^{q'}(\mathbb{R}^D)}^{\frac{1}{r}}\cdot \Vert f_m \Vert_{L^{r'}(0,1)},
		\end{equation}
	for any $p = rq$, $1 \leq r,q <\infty$, $\frac{1}{r}+\frac{1}{r'}=1$, $\frac{1}{q}+\frac{1}{q'}=1$, assuming $Z_t$ admits density $f_{Z_t}$.
\end{proposition}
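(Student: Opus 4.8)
The plan is to start from the Hölder bound \eqref{eq:qbrm_bound_holder_new} and to control the remaining factor $\Vert Q_{X_t} - Q_{Y_t}\Vert_{L^r(0,1)}$ by the $L^p$ distance of the value functions, using the $r$-Wasserstein distance as a bridge. Recall the classical representation of $W_r$ on $\mathbb{R}$: for real random variables with quantile functions $Q_{X_t}, Q_{Y_t}$ one has $\int_0^1 |Q_{X_t}(u) - Q_{Y_t}(u)|^r\,du = W_r(F_{X_t}, F_{Y_t})^r$ (the monotone, i.e.\ comonotone, coupling is optimal), and since the joint law of the pair $(X_t, Y_t) = (V_t(Z_t)^+, U_t(Z_t)^+)$ is in particular a coupling of $F_{X_t}$ and $F_{Y_t}$, the infimum defining $W_r$ gives $W_r(F_{X_t}, F_{Y_t})^r \le \mathbf{E}[\,|X_t - Y_t|^r\,]$. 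So first I would record $\Vert Q_{X_t} - Q_{Y_t}\Vert_{L^r(0,1)} \le (\mathbf{E}[\,|X_t - Y_t|^r\,])^{1/r}$, referring to the same circle of ideas already invoked for the $W_\infty$ case above via \cite{follmer2002,weber2006}.

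Next I would pass from exposures to value functions by the elementary inequality $|a^+ - b^+| \le |a - b|$, which yields $\mathbf{E}[\,|X_t - Y_t|^r\,] \le \mathbf{E}[\,|V_t(Z_t) - U_t(Z_t)|^r\,] = \int_{\mathbb{R}^D} |V_t(z) - U_t(z)|^r f_{Z_t}(z)\,dz$, the last equality using that $Z_t$ admits the density $f_{Z_t}$. Then I would apply Hölder's inequality with the conjugate pair $(q, q')$ to separate the two factors: $\int |V_t - U_t|^r f_{Z_t}\,dz \le \big(\int |V_t - U_t|^{rq}\,dz\big)^{1/q}\big(\int f_{Z_t}^{q'}\,dz\big)^{1/q'} = \Vert V_t - U_t\Vert_{L^p(\mathbb{R}^D)}^{\,r}\,\Vert f_{Z_t}\Vert_{L^{q'}(\mathbb{R}^D)}$, where $p = rq$ gives $rq = p$ and $\big(\int |V_t - U_t|^{p}\big)^{1/q} = \Vert V_t - U_t\Vert_{L^p}^{p/q} = \Vert V_t - U_t\Vert_{L^p}^{r}$. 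Taking $r$-th roots leaves $\Vert Q_{X_t} - Q_{Y_t}\Vert_{L^r(0,1)} \le \Vert V_t - U_t\Vert_{L^p(\mathbb{R}^D)}\,\Vert f_{Z_t}\Vert_{L^{q'}(\mathbb{R}^D)}^{1/r}$, and substituting this into \eqref{eq:qbrm_bound_holder_new} produces exactly \eqref{eq:lp_bound}.

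The argument is a chain of standard inequalities, so there is no deep analytical obstacle; the one place that genuinely needs care is the exponent bookkeeping — confirming that $p = rq$ is precisely the exponent for which $|V_t - U_t|^r$ belongs to $L^q(\mathbb{R}^D)$ and that the power of $\Vert f_{Z_t}\Vert_{L^{q'}}$ comes out as $1/r$ rather than $1$ after the final root. A minor point worth a sentence is the endpoint behaviour: the bound is vacuously true whenever $\Vert V_t - U_t\Vert_{L^p}$ or $\Vert f_{Z_t}\Vert_{L^{q'}}$ is infinite, and for $r = 1$ the Wasserstein step degenerates to the plain $L^1$ contraction, so \eqref{eq:lp_bound} is consistent with the uniform-norm estimates obtained earlier through $W_\infty$.
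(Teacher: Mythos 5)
Your proof is correct and follows essentially the same route as the paper's: bound $\Vert Q_{X_t}-Q_{Y_t}\Vert_{L^r(0,1)}$ by $\Vert X_t-Y_t\Vert_{L^r(\Omega,\mathbb{P})}$ via the $W_r$ coupling characterization (the paper cites \cite[Corollary~7.4.6]{Rachev2013TheMO} for this), pass to $|V_t-U_t|$ using $|a^+-b^+|\leq|a-b|$, apply H\"older with the pair $(q,q')$ to split off $\Vert f_{Z_t}\Vert_{L^{q'}}$, and combine with \eqref{eq:qbrm_bound_holder_new}. The exponent bookkeeping matches the paper's exactly.
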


\begin{proof}
We observe that 
\begin{equation*}\label{eq:q_lp_bound}
\Vert Q_{X_t} - Q_{Y_t}\Vert_{L^r(0,1)}^r \leq \Vert X_t - Y_t \Vert_{L^r(\Omega, \mathbb{P})}^r = \int_{\mathbb{R}^D} |V_t(z)^+-U_t(z)^+|^r f_{Z_t}(z)dz
\end{equation*}
directly follows from the characterization of $r$-Wasserstein distance
\begin{equation*}
W_r(X,Y)^r = \inf \left\{ \Vert X'-Y' \Vert_{L^r(\Omega)} | F_{X'} = F_X, F_{Y'} = F_Y \right\} = \int_0^1 |Q_X(u) - Q_Y(u)|^r du
\end{equation*}
for $1 \leq r < \infty$, which is derived from  \cite[Corollary~7.4.6]{Rachev2013TheMO}. Using the H\"{o}lder inequality for $1 \leq q \leq\infty$, $\frac{1}{q}+\frac{1}{q'}=1$, we obtain
\begin{equation*}\label{eq:qfs_lp_bound}
\begin{split}
\Vert Q_{X_t} - Q_{Y_t}\Vert_{L^r(0,1)}^r &\leq \Vert |V_t^+ - U_t^+|^r \Vert_{L^{q}(\mathbb{R}^D)} \cdot \Vert f_{Z_t} \Vert_{L^{q'}(\mathbb{R}^D)} \\
&= \Vert V_t^+ - U_t^+ \Vert_{L^{p}(\mathbb{R}^D)}^r \cdot \Vert f_{Z_t} \Vert_{L^{q'}(\mathbb{R}^D)}\\
&\leq \Vert V_t - U_t \Vert_{L^{p}(\mathbb{R}^D)}^r \cdot \Vert f_{Z_t} \Vert_{L^{q'}(\mathbb{R}^D)},
\end{split}
\end{equation*}
for $p = rq$.
Combining the above bound with \eqref{eq:qbrm_bound_holder_new} completes the proof.

\end{proof}

%\begin{remark}\label{rem:lst_sq}
While Proposition~\ref{prop:lp_bound} is formulated for general $p$, the least-squares fit of pricers corresponding to $p=rq=2$ is of greatest practical interest. For the particular choice $r=1$ and $q=2$ \eqref{eq:lp_bound} reads
\begin{equation*}
		|\rho_m(X_t) - \rho_m(Y_t)| \leq \Vert V_t - U_t \Vert_{L^2(\mathbb{R}^D)} \cdot \Vert f_{Z_t} \Vert_{L^{2}(\mathbb{R}^D)}\cdot \Vert f_m \Vert_{\infty},
\end{equation*}
while for $r=2$ and $q=1$ we have
\begin{equation*}
		|\rho_m(X_t) - \rho_m(Y_t)| \leq \Vert V_t - U_t \Vert_{L^2(\mathbb{R}^D)} \cdot \Vert f_{Z_t} \Vert_{\infty}^{\frac{1}{2}} \cdot \Vert f_m \Vert_{L^{2}(0,1)}.
\end{equation*}
 
\subsection{Comparison to uniform approximations in a tractable special case}\label{sec:example_new}

Like the uniform estimates, the final exposure error depends on the quality of the underlying function approximation. In addition, the $L^p$ error estimate for $1\leq p < \infty$ also depends on: 1) the structure of each exposure measure under consideration; and 2) the distribution of underlying risk factors.

These three components present in the error estimate are indeed driving the error itself which becomes apparent in the special case of positive, increasing and continuous $V_t$ and $U_t$ where
\begin{equation*}
\rho_m(X_t) - \rho_m(Y_t) = \int_{-\infty}^{\infty} \left(V_t(z)-U_t(z)\right) f_{Z_t}(z) f_m\left(F_{Z_t}(z)\right)dz
\end{equation*}
under the assumptions of Proposition \ref{prop:lp_bound}.
This special case is illustrative as it circumvents the key technical difficulty of the more general case. The representation reveals how the exposure error relates to the function approximation.
Namely, we see that $f_{Z_t}$ and $f_m\circ F_{Z_t}$ highly weighing large point-wise  errors $V_t(z) - U_t(z)$ can cause an error in the exposure measurement larger than  $\Vert V_t - U_t \Vert_{L^p(\mathbb{R})}$. However, the error estimate \eqref{eq:lp_bound} shows that it is smaller than a constant times this norm and thus that smaller function approximation errors lead to smaller exposure errors. While for uniform approximations this constant would be one, we next construct an example where the constant in the $L^p$ estimate is larger.

In this example, we consider value functions $V_t$ and $U_t$ of two digital options driven by a normally distributed risk factor $Z_t$ yielding tractable exposure calculations. Here, we choose $U_t$ close to the  value function  $V_t$ in the $L^2$ norm so that it can play the role of the approximation of $V_t$ in Scheme \ref{fig:process_flow}.

Specifically, we choose
$Z_t\sim\mathcal{N}(-\frac{1}{2}\sigma^2 t, \sigma^2 t) \hspace{0.1cm} \text{with} \hspace{0.1cm} t = 1/24 $ and $ \sigma = 0.05,$
$V_t(z) = \Phi\left(d^{k_1}(z)\right)$ with $ k_1 = 0.03,$ and $ U_t(z) = \Phi\left(d^{k_2}(z)\right)$ with $ k_2 = 0.04,$
where  $d^k(z) = (z - k -\frac{1}{2}\sigma^2\tau)/{ \sqrt{\sigma^2\tau}}$ with $ \tau = 1/24,$ and $\Phi$ the CDF of a standard normal distribution. Then $X_t = V_t(Z_t)$ and $Y_t = U_t(Z_t)$ correspond to the exposures of two digital call options with different log-strikes but same otherwise, see Figure \ref{fig:valuefs_digi}.

The $L^2$ distance between these value functions is $\Vert V_t - U_t \Vert_{L^2(\mathbb{R})} = 0.0516$.
In Figure \ref{fig:temp2}, we observe the following exposure measurements
\begin{equation*}
\begin{split}
\mathrm{PFE}_{0.99}(X_t) = 0.2666,&\hspace{0.5cm} \mathrm{CES}_{0.99}(X_t) = 0.3904,\\
\mathrm{PFE}_{0.99}(Y_t) = 0.0545,&\hspace{0.5cm} \mathrm{CES}_{0.99}(Y_t) = 0.1139,
\end{split}
\end{equation*}
and note that
\begin{equation*}
\begin{split}
|\mathrm{PFE}_{0.99}(X_t) - \mathrm{PFE}_{0.99}(Y_t)| > 4\cdot\Vert V_t - U_t \Vert_{L^2(\mathbb{R})},\\
|\mathrm{CES}_{0.99}(X_t) - \mathrm{CES}_{0.99}(Y_t)| > 5\cdot\Vert V_t - U_t \Vert_{L^2(\mathbb{R})}.
\end{split}
\end{equation*}
In other words, the exposure measurements induced by $V_t$ and $U_t$ differ significantly more than their $L^2$ distance.

The example highlights the previosly discussed influence of different factors on the final error through a practically relevant example, i.e., the importance of considering the full structure of the error in order to reliably accelerate exposure calculations with $L^p$ approximations. Even though PFE is not covered by Proposition \ref{prop:lp_bound}, the result fits intuitively as a limit case where all probability mass $m$ is concentrated in a single point, namely, the considered confidence level.

\begin{figure}\begin{minipage}[t]{0.45\textwidth}
\scalebox{0.85}{
\input{valuefs_NEW.tex}
}
\caption{Value functions $V_t(z)$ (solid) and $U_t(z)$ (dashed) of the two digital call options with slightly different log-strikes and same otherwise.}
    \label{fig:valuefs_digi}
\end{minipage}
\hfill
\begin{minipage}[t]{0.45\textwidth}
\scalebox{0.85}{
\input{testq_NEW.tex}
}
\caption{Quantile functions $Q_{X_t}(p)$ (solid) and $Q_{Y_t}(p)$ (dashed) of exposures ${X_t = V_t(Z_t)}$ and $Y_t = U_t(Z_t)$ for $p \geq 0.99$. $\mathrm{PFE}_{0.99}(X_t)$ is denoted by $\times$, and $\mathrm{PFE}_{0.99}(Y_t)$ by $\bullet$, on the $y$-axis. The gray area is  $0.01 \cdot \mathrm{CES}_{0.99}(X_t)$, and the dashed area is $0.01 \cdot \mathrm{CES}_{0.99}(Y_t)$.} \label{fig:temp2}
\end{minipage}
\end{figure}

\subsection{Finite-sample error bounds}

In practice, one typically cannot directly calculate the exact exposure measurement of interest, $ \rho(X_t) =  \rho(F_{X_t})$, as the distribution $F_{X_t}$ is not available explicitly. Instead, one considers the empirical estimator $\widehat{\rho}(\mathbf{x}_t) = \rho(F_{\mathbf{x}_t})$. In generic accelerations of exposure measurements, one further approximates $\widehat{\rho}(\mathbf{x}_t)$ by $\widehat{\rho}(\mathbf{y}_t)$, where $\mathbf{y}_t$ is the data sample obtained when $V_t$ is replaced by its approximation $U_t$ in full re-evaluation.
In this setting, we obtain the following finite sample error bounds, which are shown in Appendix~\ref{app:proof_emp_lp}.

\begin{proposition}\label{prop:emp_lp}
Let $\mathbf{X}_t = (X_t^i)_{i = 1}^n$ and $\mathbf{Y}_t = (Y_t^i)_{i = 1}^n$ be the random samples of the exposures $X_t=V_t(Z_t)^+$ and $Y_t=U_t(Z_t)^+$ , i.e., $X_t^i =  V_t(Z_t^i)^+$ and $Y_t^i = U_t(Z_t^i)^+$ where $Z_t^i \stackrel{i.i.d.}{\sim} Z_t$, and we assume $Z_t$ admits a bounded density $f_{Z_t}$. Let $1 \leq p < \infty$ { and $\eta \in (0,1]$}. With a probability of at least $1-\eta$ we have:
\begin{enumerate}
\item[(a)] For any law-invariant exposure measure $\rho$: 
\[
\left| \widehat \rho (\mathbf{X}_t) - \widehat \rho (\mathbf{Y}_t) \right| \leq \left(\frac{\|f_{Z_t}\|_\infty}{{\eta}}\right)^{1/p} n^{1/p} \|V_t - U_t\|_{L^p(\mathbb{R})}.
\]
\item[(b)] For any quantile-based exposure measure $\rho_m$ with absolutely continuous $m$ with density $f_m$:
\begin{equation*}
\begin{split}
\left| \widehat \rho_m (\mathbf{X}_t) - \widehat \rho_m (\mathbf{Y}_t) \right| \leq \|f_m\|_{L^q(0,1)} &\|f_{Z_t}\|_\infty^{1/p}  \|V_t - U_t\|_{L^p(\mathbb{R})}\\
&+ \sqrt[2p]{\frac{-\ln(\eta)}{n}} \|f_m\|_{L^q(0,1)} \| V_t - U_t\|_\infty,
\end{split}
\end{equation*}
where $1/p+1/q=1$.
\end{enumerate}
\end{proposition}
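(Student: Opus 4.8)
The plan is to reduce both bounds to control of the gap between the two empirical quantile functions $Q_{\mathbf{X}_t}$ and $Q_{\mathbf{Y}_t}$, then to bound that gap by the random sum $\sum_{i=1}^n |V_t(Z_t^i) - U_t(Z_t^i)|^p$, and finally to concentrate this sum --- by a crude Markov estimate for part (a) and by Hoeffding's inequality for part (b). In both parts I would first dispose of the degenerate cases $\|V_t - U_t\|_{L^p(\mathbb{R}^D)} = \infty$, and, in (b), additionally $\|V_t - U_t\|_\infty = \infty$, in which the asserted bounds are vacuous.

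For (a), I would retrace the chain of inequalities preceding Proposition~\ref{prop:uniform_bound_11}, which is available since the empirical laws have finite support: $1$-Lipschitzness of a law-invariant exposure measure in $W_\infty$ gives $|\widehat{\rho}(\mathbf{X}_t) - \widehat{\rho}(\mathbf{Y}_t)| \le \|Q_{\mathbf{X}_t} - Q_{\mathbf{Y}_t}\|_\infty$, and Lemma~\ref{lem:ord_diffs}, combined with $|V_t(z)^+ - U_t(z)^+|\le|V_t(z)-U_t(z)|$ and the elementary $\ell^\infty \le \ell^p$ inequality, yields
\[
\|Q_{\mathbf{X}_t} - Q_{\mathbf{Y}_t}\|_\infty = \max_{i=1,\dots,n} |X_t^i - Y_t^i| \le \Bigl(\sum_{i=1}^n |V_t(Z_t^i) - U_t(Z_t^i)|^p\Bigr)^{1/p}.
\]
As the $Z_t^i$ are i.i.d.\ copies of $Z_t$ with bounded density, the expectation of the inner sum is at most $n\,\|f_{Z_t}\|_\infty\,\|V_t - U_t\|_{L^p(\mathbb{R}^D)}^p$, and Markov's inequality at level $\eta$ then shows that, with probability at least $1-\eta$, the sum is bounded by $n\|f_{Z_t}\|_\infty\|V_t - U_t\|_{L^p(\mathbb{R}^D)}^p/\eta$. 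Taking $p$-th roots gives exactly the claim.

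For (b), I would begin from the quantile estimate \eqref{eq:qbrm_bound}, applied to the empirical exposures and combined with Hölder's inequality with exponents $p$ and $q$:
\[
|\widehat{\rho}_m(\mathbf{X}_t) - \widehat{\rho}_m(\mathbf{Y}_t)| \le \int_0^1 |Q_{\mathbf{X}_t}(u) - Q_{\mathbf{Y}_t}(u)|\,f_m(u)\,du \le \|f_m\|_{L^q(0,1)}\,\|Q_{\mathbf{X}_t} - Q_{\mathbf{Y}_t}\|_{L^p(0,1)}.
\]
Next I would use the characterization of the $p$-Wasserstein distance invoked in the proof of Proposition~\ref{prop:lp_bound}, namely $\|Q_{\mathbf{X}_t} - Q_{\mathbf{Y}_t}\|_{L^p(0,1)}^p = W_p(F_{\mathbf{X}_t},F_{\mathbf{Y}_t})^p$, and bound $W_p^p$ by the cost of the diagonal coupling $\tfrac1n\sum_i \delta_{(X_t^i,Y_t^i)}$, which gives $\|Q_{\mathbf{X}_t} - Q_{\mathbf{Y}_t}\|_{L^p(0,1)}^p \le \tfrac1n\sum_{i=1}^n W_i$ with $W_i := |V_t(Z_t^i) - U_t(Z_t^i)|^p$. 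The $W_i$ are i.i.d., take values in $[0,\|V_t - U_t\|_\infty^p]$ and have mean at most $\|f_{Z_t}\|_\infty\|V_t - U_t\|_{L^p(\mathbb{R}^D)}^p$, so Hoeffding's inequality (absorbing the factor $\tfrac12$ in the exponent) gives, with probability at least $1-\eta$,
\[
\frac1n\sum_{i=1}^n W_i \le \|f_{Z_t}\|_\infty\|V_t - U_t\|_{L^p(\mathbb{R}^D)}^p + \|V_t - U_t\|_\infty^p\sqrt{\frac{-\ln\eta}{n}}.
\]
Taking $p$-th roots and applying the subadditivity $(a+b)^{1/p}\le a^{1/p}+b^{1/p}$ valid for $p\ge1$ turns the right-hand side into $\|f_{Z_t}\|_\infty^{1/p}\|V_t - U_t\|_{L^p(\mathbb{R}^D)} + \|V_t - U_t\|_\infty\sqrt[2p]{-\ln\eta/n}$; multiplying through by $\|f_m\|_{L^q(0,1)}$ via the Hölder estimate above yields (b).

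The individual steps are routine, so I expect the only genuine choices to be structural. In (a) the factor $n^{1/p}$ is unavoidable: passing from an $L^p$ control of $V_t - U_t$ to control of the largest order-statistic difference necessarily costs it, which is precisely why a blunt Markov bound on the $p$-th-moment sum --- with its $\eta^{-1/p}$ rather than $\sqrt{\log(1/\eta)}$ dependence --- is both the natural and essentially the only tool available at this level of generality. In (b) the two load-bearing points are the diagonal-coupling bound for $W_p$ between two empirical measures carrying equally many atoms, and the splitting of the $p$-th root of a sum into a bias term and a fluctuation term via subadditivity of $t\mapsto t^{1/p}$; this last step is what produces the two separate summands and the unusual exponent $\tfrac1{2p}$ in the stated inequality.
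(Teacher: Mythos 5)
Your proof is correct and follows essentially the same route as the paper's: both parts reduce the exposure-measure gap to the $\ell^p$ norm of the unordered differences $|X_t^i-Y_t^i|$ and then concentrate $\sum_i|V_t(Z_t^i)-U_t(Z_t^i)|^p$ by a Markov-type bound in (a) and by Hoeffding plus subadditivity of $t\mapsto t^{1/p}$ in (b). The only differences are cosmetic equivalences --- you use Markov on the sum where the paper uses a pointwise tail bound with a union bound, and you pass from sorted to unsorted samples via the diagonal Wasserstein coupling where the paper invokes the majorization result of Lemma~\ref{lem:ord_diffs}; both yield identical constants.
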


While the estimate (a) holds in general, the factor $n^{\frac1p}$ reduces
 the quality of the estimate. On the other hand, estimate (b) formally requires $V_t-U_t$ to be bounded, but the influence of the uniform norm reduces with larger number of samples. Thus even in cases where the uniform error is large, the inequality can provide a reasonable estimate based on the $L^p$ norm.

\section{Numerical experiments}\label{sec:num_exp}

The theoretical results summarised in Corollary \ref{cor:cheb_conv} show the efficiency of the Chebyshev interpolation for accelerating simulation-based exposure calculations. In this section we present numerical experiments confirming these results and highlighting further practical applications of the proposed method: 
First, we replace a variety of numerical pricers in the full re-evaluation with their Chebyshev approximations, and discuss the obtained run-time gains and accuracies for EE, PFE and CES. Second, we propose, implement and discuss the adaptive choice of the interpolation degree in order to maximize run-time gains for any given number of simulations. Third, in order to highlight how the method can be tailored to more advanced counterparty risk settings we employ fast and accurate Chebyshev approximations of Greeks to accelerate CVA sensitivity and dynamic initial margin computations.

\subsection{Numerical results of accelerated exposure calculations}\label{subsec:num_res}

In the first set of numerical experiments, we assess the efficiency of the proposed method by evaluating accuracy and run-time gains over a wide range of practically relevant settings. We test the performance for $12$ option pricing problems covering model features such as jumps and stochastic volatility, and option features such as non-linearity, payoff discontinuity and path-dependency. Specifically, we select European, digital, barrier and American options, in the BSM, MJD and HSV models, calibrated to market data.

In the following, we present the key elements of our experimental setup.  The payoff and model parameters used in the experiment are given in Appendix \ref{app:params}.  Calculation and implementation specifics are deferred to Appendix \ref{sec:calc_met_int_dom_deg}.

To analyse the accuracy of Chebyshev exposure profiles we compare them to state-of-the-art full re-evaluation benchmarks. For pricing of European, barrier, and American options in the BSM and HSV models we employ finite-difference PDE solvers, as implemented in QuantLib \cite{quantlib}. Specifically, the Douglas scheme is used for European, barrier, and American options in the BSM model.  In the HSV model, for European and American options the modified Craig-Sneyd scheme is used, and the Hundsdorfer-Verwer scheme for the barrier option, see \cite{hout2008} for a discussion of the different schemes. In the MJD model, and for digital options in the BSM and HSV models, we use the COS method \cite{fang2009euro,fang2009} which goes beyond industry standard benchmarks.

We first draw realizations of the risk factors $\{(z_{t_u}^i)_{u = 1}^m\}_{i=1}^n$ for each of the three considered stochastic models. For this, we use $n = 10^4$ sample paths and $ m = 52$ time steps, employing the Euler discretization of the underlying stochastic differential equations. To assess the statistical accuracy, we compute the relative length of the corresponding $95\%$ confidence interval for each considered empirical estimator, see Appendix \ref{sec:calc_met_int_dom_deg}. The confidence intervals are derived from the assumption of a normally distributed exposure estimator with mean equal to the estimator's value and variance equal to its empirical variance, see Appendix \ref{sec:exp_mea_exa}.

Next, in each of the 12 option pricing setups we calculate path-wise exposures (taking early-exercise features into account as described in Appendix \ref{sec:calc_met_int_dom_deg})
\begin{equation}\label{eq:pwe}
\left(\mathbf{x}_{t_u}=\left(V_{t_u}(z^i_{t_u})^+\right)_{i=1}^n\right)_{u=1}^m \hspace{0.25cm}\text{ and }\hspace{0.25cm} \left(\mathbf{y}_{t_u}=\left(U_{t_u}(z^i_{t_u})^+\right)_{i=1}^n\right)_{u=1}^m 
\end{equation}
with the reference pricer $V_{t_u}$ and its Chebyshev approximation $U_{t_u} \approx V_{t_u}$. To evaluate the run-time improvements, we measure the execution times for both calculations outlined in equation \eqref{eq:pwe} and determine the speed-up factors by dividing the time taken for the calculation using $V_{t_u}$ by the time using $U_{t_u}$. Note that the time to evaluate $\widehat{\rho}$ from the sample prices is negligible in comparison to the time of repeated evaluations of the pricer or its approximation. Therefore, the speed-up factors are evaluated independently of the considered exposure measure $\rho$.

We quantify the \emph{acceleration error}, that is, the error induced by the Chebyshev acceleration, as the maximum relative error over the empirical time-profile,
\begin{equation}\label{eq:max_rel_errs}
\varepsilon_{\widehat{\rho}} = \max_{u=1,...,m} \left| \widehat{\rho}(\mathbf{x}_{t_u}) - \widehat{\rho}(\mathbf{y}_{t_u})\right| / \widehat{\rho}(\mathbf{x}_{t_u}),
\end{equation}
for three different exposure measures $\rho$: Expected Exposure $\mathrm{EE}$, Potential Future Exposure $\mathrm{PFE}_{0.95}$, and Credit Expected Shortfall $\mathrm{CES}_{0.95}$.

\begin{figure}
\centering  
  	% This file was created with tikzplotlib v0.10.1.
\begin{tikzpicture}

\definecolor{darkgray176}{RGB}{176,176,176}
\definecolor{lightblue161201244}{RGB}{161,201,244}
\definecolor{lightgray204}{RGB}{204,204,204}
\definecolor{mediumturquoise100181205}{RGB}{100,181,205}
\definecolor{steelblue76114176}{RGB}{76,114,176}

\begin{axis}[
legend entries={BSM,
                MJD,
                HSV},
legend cell align={left},
legend style={
  fill opacity=0.8,
  draw opacity=1,
  text opacity=1,
  at={(0.05,0.9)},
  anchor=north west,
  draw=lightgray204
},
tick align=outside,
x grid style={darkgray176},
xmin=-0.3125, xmax=3.8125,
xtick style={color=white},
xtick={0.25,1.25,2.25,3.25},
xticklabels={European,Digital,Barrier,American},
ytick pos=right,
y grid style={darkgray176},
ymin=0, ymax=241.5,
ytick style={color=black},
ytick={14,50,87,100,150,200,230},
yticklabels={14,50,\textcolor{gray}{87},100,150,200,\textbf{230-fold speed-up}},
width=10cm,
height=6cm
]

\addlegendimage{only marks, mark=square*,draw=white,fill=lightblue161201244}
\addlegendimage{only marks, mark=square*,draw=white,fill=mediumturquoise100181205}
\addlegendimage{only marks, mark=square*,draw=white,fill=steelblue76114176}

\draw[dashed] (axis cs: -0.3125, 230) -- (axis cs: 3.8125, 230);
\draw[dashed] (axis cs: -0.3125, 14) -- (axis cs: 3.8125, 14);
\draw[dashed,darkgray176] (axis cs: -0.3125, 87) -- (axis cs: 3.8125, 87);

\draw[draw=lightgray204,fill=lightblue161201244] (axis cs:-0.125,1) rectangle (axis cs:0.125,83);
%\addlegendimage{ybar,ybar legend,draw=white,fill=lightblue161201244}
%\addlegendentry{BSM}

\draw[draw=lightgray204,fill=lightblue161201244] (axis cs:0.875,1) rectangle (axis cs:1.125,69);
\draw[draw=lightgray204,fill=lightblue161201244] (axis cs:1.875,1) rectangle (axis cs:2.125,88);
\draw[draw=lightgray204,fill=lightblue161201244] (axis cs:2.875,1) rectangle (axis cs:3.125,80);
\draw[draw=lightgray204,fill=mediumturquoise100181205] (axis cs:0.125,1) rectangle (axis cs:0.375,97);
%\addlegendimage{ybar,ybar legend,draw=white,fill=mediumturquoise100181205}
%\addlegendentry{MJD}

\draw[draw=lightgray204,fill=mediumturquoise100181205] (axis cs:1.125,1) rectangle (axis cs:1.375,93);
\draw[draw=lightgray204,fill=mediumturquoise100181205] (axis cs:2.125,1) rectangle (axis cs:2.375,174);
\draw[draw=lightgray204,fill=mediumturquoise100181205] (axis cs:3.125,1) rectangle (axis cs:3.375,230);
\draw[draw=lightgray204,fill=steelblue76114176] (axis cs:0.375,1) rectangle (axis cs:0.625,59);
%\addlegendimage{ybar,ybar legend,draw=white,fill=steelblue76114176}
%\addlegendentry{HSV}

\draw[draw=lightgray204,fill=steelblue76114176] (axis cs:1.375,1) rectangle (axis cs:1.625,14);
\draw[draw=lightgray204,fill=steelblue76114176] (axis cs:2.375,1) rectangle (axis cs:2.625,36);
\draw[draw=lightgray204,fill=steelblue76114176] (axis cs:3.375,1) rectangle (axis cs:3.625,46);

\end{axis}

\end{tikzpicture}%
  	\caption{Run-time gains of the Chebyshev acceleration. The figure displays \textbf{\emph{factors}} by which the approach is faster than the standard full re-evaluation in calculation of path-wise exposures. The mean speed-up factor of 87 implies that computations where the full re-evaluation takes hours, take minutes with the accelerated approach.}\label{fig:SUF} 
\end{figure}
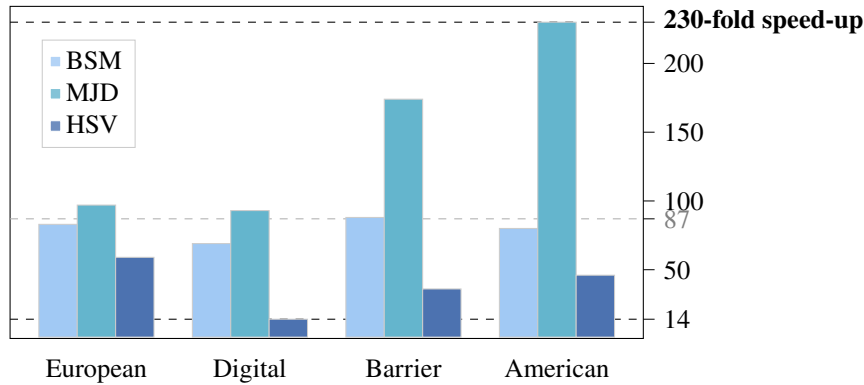

\begin{table}[]
\centering
\begin{adjustbox}{max width=\textwidth}
\begin{tabular}{l@{\extracolsep{4pt}}l@{\extracolsep{4pt}}cccccc}
\toprule
      &          & \multicolumn{2}{c}{\textbf{EE}}                 & \multicolumn{2}{c}{\textbf{PFE}}         & \multicolumn{2}{c}{\textbf{CES}}            \\ \cmidrule{3-4} \cmidrule{5-6} \cmidrule{7-8}
\textbf{Model} & \textbf{Option}   & $\varepsilon$ & MC             & $\varepsilon$ & MC             & $\varepsilon$ & MC            \\ \midrule
BSM   & European\hspace{0.2cm} & $7.5 \times 10^{-4}$           & $4.6 \times 10^{-2}$ & $1.5 \times 10^{-3}$           & $5.5 \times 10^{-2}$ & $9.1 \times 10^{-4}$           & $5.1 \times 10^{-2}$ \\
MJD   & European & $4.2 \times 10^{-4}$           & $9.3 \times 10^{-2}$ & $2.0 \times 10^{-3}$           & $6.5 \times 10^{-2}$ & $3.2 \times 10^{-4}$           & $7.6 \times 10^{-2}$ \\
HSV   & European\vspace{0.2cm} & $1.3 \times 10^{-4}$           & $3.7 \times 10^{-2}$ & $8.3 \times 10^{-4}$           & $2.8 \times 10^{-2}$ & $2.8 \times 10^{-4}$           & $4.4 \times 10^{-2}$ \\
BSM   & Digital  & $5.9 \times 10^{-5}$           & $4.6 \times 10^{-2}$ & $8.4 \times 10^{-4}$           & $3.6 \times 10^{-3}$ & $1.9 \times 10^{-4}$           & $3.2 \times 10^{-4}$ \\
MJD   & Digital  & $4.3 \times 10^{-4}$           & $1.6 \times 10^{-2}$ & $6.9 \times 10^{-4}$           & $2.1 \times 10^{-3}$ & $2.3 \times 10^{-4}$           & $2.5 \times 10^{-4}$ \\
HSV   & Digital\vspace{0.2cm}  & $6.8 \times 10^{-4}$           & $3.1 \times 10^{-2}$ & $4.2 \times 10^{-3}$           & $1.8 \times 10^{-2}$ & $2.0 \times 10^{-3}$           & $2.7 \times 10^{-3}$ \\
BSM   & Barrier  & $1.3 \times 10^{-4}$           & $4.4 \times 10^{-2}$ & $1.5 \times 10^{-3}$           & $6.9 \times 10^{-3}$ & $3.9 \times 10^{-3}$           & $1.0 \times 10^{-2}$ \\
MJD   & Barrier  & $2.8 \times 10^{-4}$           & $8.7 \times 10^{-2}$ & $1.9 \times 10^{-2}$           & $6.0 \times 10^{-2}$ & $5.9 \times 10^{-4}$           & $5.6 \times 10^{-2}$ \\
HSV   & Barrier\vspace{0.2cm}  & $5.2 \times 10^{-4}$           & $3.6 \times 10^{-2}$ & $1.6 \times 10^{-2}$           & $2.0 \times 10^{-2}$ & $1.1 \times 10^{-2}$           & $1.2 \times 10^{-2}$ \\
BSM   & American & $6.0 \times 10^{-6}$           & $9.4 \times 10^{-3}$ & $1.2 \times 10^{-5}$           & $3.5 \times 10^{-2}$ & $5.0 \times 10^{-6}$           & $4.1 \times 10^{-2}$ \\
MJD   & American & $4.7 \times 10^{-3}$           & $1.4 \times 10^{-1}$ & $2.0 \times 10^{-5}$           & $4.1 \times 10^{-2}$ & $4.0 \times 10^{-6}$           & $3.8 \times 10^{-2}$ \\
HSV   & American & $8.9 \times 10^{-3}$           & $1.4 \times 10^{-1}$ & $1.1 \times 10^{-2}$           & $9.9 \times 10^{-2}$ & $1.2 \times 10^{-2}$           & $7.7 \times 10^{-2}$ \\ \bottomrule
\end{tabular}
\end{adjustbox}
\caption{Acceleration of exposure calculations with the Chebyshev interpolation. In addition to the maximum relative errors $\varepsilon$ in exposures time-profiles \eqref{eq:max_rel_errs}, the table displays the corresponding lengths of $95\%$ Monte Carlo confidence intervals relative to the values of the estimates (MC). \vspace{-0.25cm}}
\label{tab:my-table}
\end{table}

Approximations $U_{t_u} \approx V_{t_u}$ are obtained by the Chebyshev interpolation with the domain splitting for each time point except for the last one where the value $V_T$ is equal to the analytically known payoff. In this section, we choose to demonstrate the run-time gains at the level of accuracy of the simulation-based methodology for a pre-specified number of scenarios, i.e., we aim for an acceleration error to be below the MC error (both measured in the relative sense) in all cases.

To this end, we begin with the interpolation degree 8 on each subdomain (for 2D cases, i.e., in the HSV model, this means 8 in each dimension), and hand-tune it if necessary to reach the MC accuracy. While this is done for comparability reasons, in the next section we propose a systematic approach to adaptively choose the interpolation degree. Overall, the interpolation with degree 8 requires $2 \cdot 9 \cdot 51 = 918$ evaluations of a reference pricer in 1D cases (BSM, MJD), and  $ 2 \cdot 81 \cdot 51 = 8\,262$ evaluations in 2D cases (HSV). We increase the interpolation degrees in three cases,  resulting in $936$ pricer evaluations for the barrier option in the BSM model, and $16\,950$ and $13\,254$ evaluations respectively for digital and barrier options in the HSV model. The standard full re-evaluation benchmark requires $51 \cdot 10\,000 = 510\,000$ reference pricer evaluations.

Table \ref{tab:my-table} presents the maximum relative acceleration errors and the MC accuracies for the times at which the former are realized. We note that in all cases the acceleration errors are below the lengths of the MC confidence intervals, as the number of Chebyshev nodes was chosen accordingly. In some cases, the acceleration errors are even smaller by several orders of magnitude. For example, for the American option in the BSM model they are smaller by a factor of $10^{-3}$. Overall, the relative acceleration errors are below 1\%, and typically significantly lower, for all but 5 of the 36 considered exposure measurements, where they are still less than 2\%. Note that this holds true across all the time points, as we report the maximums over the time-profiles.

The speed-up factors observed in the experiments are reported in Figure \ref{fig:SUF}. 
 The observed run-times and thus speed-up factors account for the calls to pricers as well as other required steps, such as modifying the path-wise exposures to accommodate for the early-exercise feature in the case of American options. 

For each option type, we observe the highest run-time gains for the MJD and lowest for the HSV model. We attribute these differences to differences in the calculation times of pricers and their interpolations. Clearly, reference pricers for the MJD model are more complex than the ones for the BSM model resulting in higher run-time gains of the interpolation approach. While the reference pricers also are more complex for the HSV than for the BSM model, we here observe an additional effect. As the underlying pricing problem in the HSV model is two-dimensional, more calls to the reference pricers are needed to construct the approximation, which reduces the run-time gains.  Nevertheless, significant speed-ups of factor 14 to factor 59 are achieved in all cases.
 
For the MJD model we observe speed-up factors of around 90 for the European and digital options, 174 for the barrier and 230 for the American option. We attribute the difference in speed-up factors to the varying complexity of the reference solver. The higher the run-time for the reference solver, the higher the speed-up by the interpolation. For the barrier and American options, the reference solvers utilise a loop to account for path-dependency, increasing their run-times. An additional iterative root-finding algorithm is required to determine the early-exercise boundary in the American case. In comparison, this is not necessary for the evaluation of the Chebyshev interpolation and the run-time gain is highest in this case. 
 
In the BSM model we observe comparable speed-up factors of 69-88 across all options as the employed reference pricers are comparably efficient.

It is also interesting to note that the speed-up factors for the HSV model are increasing from the digital over the barrier to the American option, and finally it is highest for the European option. For the digital and barrier options in the HSV model, the number of Chebyshev nodes had to be increased to reach the target accuracy reducing the run-time gains. For the digital option this is due to the comparably high Monte-Carlo accuracy, especially in the case of CES, while for the barrier option we account this to the high curvature of the value function. The COS method is used for the digital option, which is of highest efficiency, explaining why the corresponding speed-up factor is smallest. Comparing the European and American case in the HSV model, we observe a smaller speed-up in the American case due to additional time required to accommodate for the early-exercise feature in the path-wise exposures in the latter case.

To summarize the results, we have observed a range of considerable speed-ups throughout the choice of different option types and models, from a factor of 14 to extraordinary speed-ups of a factor of 230, while maintaining the accuracy of the simulation based methodology. The transparency of the method facilitates the understanding of differences in the speed-ups in terms of the different model, option and procedural features adding to the attractiveness of the method. We conclude that the Chebyshev acceleration method yields fast and accurate exposure calculations even in complex cases. It thus helps to avoid oversimplification of risk to achieve practically feasible run-times.

\subsection{Preliminary investigation into adaptive choice of interpolation degrees}\label{subsec:ad_ch}

Propositions \ref{prop:uniform_bound_11} and \ref{prop:uniform_bound_12} provide decompositions of the overall error into simulation noise and function approximation error.
They enable an adaptive choice of the number of Chebyshev nodes according to a target accuracy. Here, we choose the target accuracy based on the MC noise for a given number of simulations.
We present a preliminary investigation of 
the adaptive choice, using 
an a posteriori error estimate of the function approximation.

Given a fixed number of simulations, we want to adaptively choose the number of Chebyshev points efficiently based on a posteriori interpolation error estimates. To do so, we implement the following procedure:

\begin{enumerate}
	\item Estimate the widths of the MC confidence intervals for the exposure measures EE, PFE, CES, and set the target accuracy $\varepsilon$ for the interpolation as the smallest one.
	\item Iteratively increase the interpolation degree. Starting with the approximation of degree 1:
	\begin{enumerate}
		\item[2.1] Double the interpolation degree.
		\item[2.2] Estimate the function approximation error $\varepsilon'$ as the maximum difference between the current and preceding approximation on 100 random points from the interpolation domain.
		\item[2.3] If $\varepsilon'<\varepsilon$, select the preceding approximation and stop.
	\end{enumerate}
	\item Evaluate the selected approximation path-wise and extract the empirical exposure measures.
\end{enumerate}

In the first step, we estimate the confidence intervals from path-wise exposures obtained using a piecewise linear approximation of the pricer, as in the previous section.
In the iterative step we choose to double the interpolation degree in order to exploit the resulting nested structure of the Chebyshev nodes, i.e., 
\[\{\cos(\pi k / N)\}_{k=0,...,N} \subset \{\cos(\pi k /(2N))\}_{k=0,...,2N}.\]
This reduces the number of calls to expensive pricers. For the same reason, in step 2.2 we substitute the reference pricer in the error estimation by the higher-level approximation.

Numerical results for the case of a European call option in the BSM model highlight a favorable performance of the adaptive choice. The errors induced by the Chebyshev interpolation are below the MC errors estimated from the full re-evaluation for all times of each MC simulation. The speed-up factors for each case are reported in Table \ref{tab:suf-ad}.

\begin{table}[H]
\centering
\begin{tabular}{lrrrr}
\toprule
\# paths        & $1\,250$ & $2\,500$ & $5\,000$ & $10\,000$ \\ \midrule
speed-up factor & 25       & 34       & 44       & 56       \\ \bottomrule
\end{tabular}
\caption{Speed-up factors for different numbers of MC paths using the hierarchical adaptive algorithm to determine the interpolation degree.}
\label{tab:suf-ad}
\end{table}

We observe good speed-up factors for all cases in Table \ref{tab:suf-ad}, with an increased run-time gain for larger numbers of simulation paths, confirming that our specification of the a posteriori adaptive algorithm is appropriate.
The generic structure of the proposed hierarchical algorithm makes us confident that its good performance extends to other cases of exposure calculations. Therefore, we propose it as a suitable tool when aiming for highest practically possible run-time gains for a given number of MC paths.

\subsection{Sample applications of accelerated Greeks to counterparty credit risk}\label{sec:cheby_greeks}

In order to illustrate the use of the method beyond the framework considered so far, we present preliminary investigations into two counterparty risk tasks where sensitivities, i.e., the Greeks, play a key role.
In addition to the good approximation of the price, the Chebyshev interpolation directly yields good approximations of the Greeks. Namely, 
the derivative of the approximation obtained by the Chebyshev interpolation is given by the explicit formula in terms of the interpolation coefficients.
In this section we numerically illustrate how the available approximations of Greeks can be exploited to enhance advanced counterparty credit risk calculations. The examples are kept illustrative, with selected cases in which the required derivatives of the prices are available analytically and thus serve as reliable benchmarks.

\subsubsection{Estimating CVA's delta}\label{sec:cva_sens}

Credit valuation adjustment is an adjustment made to the theoretical risk-neutral value of the trade to account for a counterparty risk \cite{green2015,gregory2020xva}. In the following we consider the case of an option whose price process $V_t(S_t)$ is independent of the default event, and assume recovery rates to be equal to zero for simplicity. Then, the unilateral credit valuation adjustment (CVA) for a trade $V_t$ depending on a risk factor $S_t = s_0 e^{\mathcal{X}_t}$ with the time-horizon $T$ is given by
	\begin{equation*}
	\text{CVA}(s_0) = \int_0^T e^{-rt} \mathbb{E}\left[V_t(s_0 e^{\mathcal{X}_t})  \right] d\mathrm{PD}(t),
	\end{equation*}
where $\mathrm{PD}(t)$ is the risk-neutral probability of counterparty default before time $t$. The above integral is typically discretized over a time-grid $\{t_u\}_{u=0}^m$ in practice,
\begin{equation*}
	\text{CVA}(s_0) \approx \sum_{u=1}^m e^{-rt_u} \mathbb{E}\left[V_{t_u}(s_0 e^{\mathcal{X}_{t_u}}) \right] \left( \mathrm{PD}(t_u) - \mathrm{PD}(t_{u-1})\right).
\end{equation*}

CVA is often hedged, which requires its derivative with respect to an underlying risk factor. Under appropriate conditions, ensuring that the order of differentiation and expectation can be interchanged, we have
\begin{equation*}
\begin{aligned}
\mathrm{CVA}'(s_0) &\approx \sum_{u=1}^m e^{-rt_u} \mathbb{E}\left[\frac{d }{d s_0} V_{t_u}(s_0 e^{\mathcal{X}_{t_u}})\right] \left( \mathrm{PD}(t_u) - \mathrm{PD}(t_{u-1})\right)\\
&= \sum_{u=1}^m e^{-rt_u}\mathbb{E}\left[  V_{t_u}'(s_0 e^{\mathcal{X}_{t_u}}) e^{\mathcal{X}_{t_u}} \right] \left( \mathrm{PD}(t_u) - \mathrm{PD}(t_{u-1})\right).
\end{aligned}
\end{equation*}

If $S_t/ S_0$ is a stochastic process that does not depend on $S_0$, the expectation in the above equation can be approximated in the simulation-based framework as
\begin{equation*}
\mathbb{E}\left[ V_{t_u}'(s_0 e^{\mathcal{X}_{t_u}}) e^{\mathcal{X}_{t_u}} \right]\approx \frac{1}{n} \sum_{i=1}^n V_{t_u}'\left(s_{t_u}^i\right) \cdot \frac{s_{t_u}^i}{s_0}.
\end{equation*}
For details on path-wise derivative estimations see \cite[Chapter~7.2]{glasserman2004monte}.

To summarize, the CVA's delta can be estimated within the simulation-based framework as
\begin{equation}\label{eq:sb_cva_sens}
\mathrm{CVA}'_{\mathrm{MC}}(s_0) = \frac{1}{n} \sum_{u=1}^m  \left[ e^{-rt_u} \left( \mathrm{PD}(t_u) - \mathrm{PD}(t_{u-1})\right)  \sum_{i=1}^n V_{t_u}'\left(s_{t_u}^i\right) \cdot \frac{s_{t_u}^i}{s_0} \right] .
\end{equation}

We illustrate the numerical efficiency of the Chebyshev derivatives for the calculation of \eqref{eq:sb_cva_sens} with an example.

\begin{example}
We consider the case of a European call option in the BSM model with the same parameters as in the previous experiment. We assume that $\mathrm{PD}$ is the uniform distribution over $[0, T]$ for simplicity.

First, we calculate the benchmark value of \eqref{eq:sb_cva_sens} with the analytically available $V_{t_u}'$
\[\mathrm{CVA}'_{\mathrm{MC}}(s_0) = 0.5600,\]
which took 34.7 seconds.

Next, we calculate the approximation of $\mathrm{CVA}'_{\mathrm{MC}}(s_0)$ using the Chebyshev sensitivities $U_{t_u}'$ in place of $V_{t_u}'$ in \eqref{eq:sb_cva_sens}, where $U_{t_u} = I_N(V_{t_u})$ is the approximation of $V_{t_u}$ with the Chebyshev interpolation of degree $N=16$, and obtain
\[\mathrm{CVA}'^{\mathrm{Cheby}}_{\mathrm{MC}}(s_0) = 0.5599,\]
with a run-time of only 32 seconds.

\end{example}

We observe that employing the derivatives of the Chebyshev interpolation in the simulation-based CVA sensitivity calculation introduces a negligible error only, and even slightly reduces the run-time compared to using the analytical formula.

Based on the numerical results and their discussion in Section \ref{subsec:num_res}, we can expect the run-time gains to be higher for higher run-times to compute the Greeks as main effect. Moreover, we expect a similar contribution of the different complexities of models, options and procedures to the actual run-time reduction as discussed in Section \ref{subsec:num_res}. Therefore we can expect considerable speed-ups for more realistic models and a wide range of situations where Greeks need to be calculated numerically. To confirm the applicability to various situations we consider an example of another counterparty risk task in the next section.

\subsubsection{Sensitivity-based initial margins}

Recent regulatory developments indicate a shift to mandatory collateralization of non-centrally cleared derivative trades \cite{bcbs2020}.
To reduce the counterparty exposure, the variation margin $\mathrm{VM}_{t+\delta} = V_{t+\delta}(Z_{t+\delta}) - V_{t}(Z_{t}) = \Delta_{t+\delta} $ is to be regularly posted between the counterparties. 
An additional collateral, the initial margin, is required to account for potential losses over the time period $[t, t+\delta)$. The forward initial margin is typically defined as $\mathrm{IM}_t = Q_{\Delta_{t+\delta}|\mathcal{F}_t}(\alpha)$, for some confidence level $\alpha$ (e.g., $\alpha = 0.99$), see, e.g., \cite{Andersen2017IM,Andersen2017MPOR} for details on IM and VM.

Due to funding costs, posting initial margins requires adjusting the value of the trade. This is known as the (initial) margin valuation adjustment (MVA),
\begin{equation}\label{eq:MVA}
\mathrm{MVA} =  \int_0^T \mathbb{E}\left[ \mathrm{IM}_t \right] \mathrm{FS}(t)dt,
\end{equation} 
where $\mathrm{FS}(t)$ is the funding spread, see, e.g., \cite[Chapter~20]{gregory2020xva}.

In the fully simulation-based framework, computing the MVA is based on a nested simulation at each considered time $t$. The outer simulation of $z_t^i$, $i = 1,...,n$, is required to calculate the expectation, and the inner simulation is used to evaluate $\mathrm{IM}_t^i = Q_{\Delta_{t+\delta}|Z_t = z_t^i}(\alpha)$ for each of the outer paths.
A general approach to approximate $\mathrm{IM}_t^i$ is: 1) the risk factors are simulated from the distribution $Z_{t+\delta}|Z_t = z_t^i$ resulting in the samples $\{ z_{t+\delta}^{ij} \}_{j=1}^p$ for each $i$; 2) the changes in the derivative value $\Delta_{t+\delta}^{ij} = V_{t+\delta}(z_{t+\delta}^{ij}) - V_{t}(z_{t}^i)$ are then calculated; and 3) the quantile $\mathrm{IM}_t^i$ is extracted from the empirical distribution induced by the sample $\boldsymbol\Delta_{t+\delta}^i = \{ \Delta_{t+\delta}^{ij} \}_{j=1}^p$.

This nested Monte-Carlo approach requires $mnp$ calls to a derivative pricing function, where $m$ is the number of exposure calculation time points, $n$ the number of paths used for exposure calculation, and $p$ the number of paths for the calculation of each initial margin. To overcome the huge computational challenges associated with a nested Monte-Carlo approach, the International Swaps and Derivatives Association (ISDA) has proposed a standard methodology based on sensitivities \cite{isda2021}. Namely, instead of obtaining the empirical samples $\boldsymbol\Delta_{t+\delta}^i$ by calling the value function on inner simulation paths, they propose to approximate $\boldsymbol\Delta_{t+\delta}^i$ by $\tilde{\boldsymbol\Delta}_{t+\delta}^i$ using the derivative's delta:
\begin{equation}\label{eq:isda_pl}
\Delta_{t+\delta}^{ij} \approx \tilde{\Delta}_{t+\delta}^{ij} = V_t'(z_t^i)(z_{t+\delta}^{ij} - z_{t}^i).
\end{equation}

Time-discretizing the MVA \eqref{eq:MVA}, and denoting ${\mathrm{IM}_t^i}_\mathrm{ISDA} = Q_{\tilde{\boldsymbol\Delta}_{t+\delta}^i}(\alpha)$, we have
\begin{equation}\label{eq:isda_mva}
\mathrm{MVA}_{\mathrm{ISDA}}= \sum_{u=1}^m  \left[ \mathrm{FS}(t_u)(t_u - t_{u-1}) \frac{1}{n} \sum_{i=1}^n {\mathrm{IM}_t^i}_\mathrm{ISDA}  \right] 
\end{equation}
as the practical estimate of MVA within the simulation-based ISDA approach employing sensitivities.

In the following example we compare the calculation of the above formula with analytically available delta and Chebyshev approximations.

\begin{example}
We again consider the case of a European call option in the BSM model with the same parameters as before. We will assume that $\mathrm{FS}$ is constant and equal to 0.01 over $[0, T]$ for simplicity.

First, we calculate \eqref{eq:isda_mva} employing the analytically available exact $V_{t_u}'$ in \eqref{eq:isda_pl} and obtain 
\[\mathrm{MVA}_{\mathrm{ISDA}} = 2.0114,\]
with a run-time of 73.3 seconds.

Next, we calculate the approximation of $\mathrm{MVA}_{\mathrm{ISDA}}$ using the derivatives of the Chebyshev interpolation $U_t'$ in place of $V_t'$ in \eqref{eq:isda_pl}, where $U_t = I_N(V_t)$ is the approximation of $V_t$ with the Chebyshev interpolation of degree $N=16$, and obtain
\[\mathrm{MVA}_{\mathrm{ISDA}}^{\mathrm{Cheby}} = 2.0168,\]
with a run-time of 72 seconds.
\end{example}

We again observe that employing the derivatives of the Chebyshev interpolation introduces only negligible errors, and slightly reduces the run-time compared to using the analytical formula.

To summarize, our numerical experiments cover three practical applications:  evaluations of exposure measures, CVA's delta and initial margins. The results consistently show that the method is ready to serve to considerably speed-up  a wide range of basic and advanced counterparty risk tasks.

\section{Related literature}\label{sec:literature}

Counterparty credit risk inherently comprises two risk components, credit risk and market risk, referring to the possibility of the counterparty's default and the uncertainty of the consequent loss, respectively. In this work we have focused on market risk side, while for related works on credit risk aspect, particularly interesting functional approximations for large portfolios, we refer to~\cite{bo2014,giesecke:13,giesecke:15}.

Quantifying market risk is required for both pricing and risk management, for a comprehensive discussion, see \cite{green2015,gregory2020xva}. Besides integrating counterparty risk into pricing models, rendering them non-linear and high-dimensional, see, e.g., \cite{crepey2016,crepey2015a,crepey2015b} on modelling, and \cite{crepey2018,andersson:25,reisinger2023} on computational aspects, there is the exposure-based approach. In the latter, the counterparty-risk-free value of a derivative is calculated classically, and then adjusted based on suitable exposure measurements. The industry standard for exposure measurements is based on Monte Carlo simulations for generating future market scenarios.
Since our proposed method follows this general approach we mainly focus on the literature discussing simulation-based counterparty exposure calculations.

In practice, MC for scenario generation in the counterparty risk setting needs to be combined with numerical pricing techniques. For example, pricing via simulation leads to nested MC \cite{broadie2011,Gordy2010} and yields significant computational challenges which has given rise to several approaches to accelerate the calculations. 
{The high computational costs of full re-evaluation are twofold. First,  evaluations of the pricing functions for each simulated scenario
require considerable numerical efforts. Second, the large number of scenarios is usually required
to obtain a sufficiently precise estimate.
Correspondingly, there are two different directions
for an attempt to accelerate the calculation
while maintaining the accuracy:
1) speed up evaluations of pricing functions; or
2) decrease the number of required scenarios (via the variance reduction or quasi-Monte Carlo techniques,
see, e.g., Chapters 4 and 5 in \cite{glasserman2004monte}, \cite{Glasserman2000VarianceRT,renzitti2020}).}
Regression-based approaches suggest themselves for the acceleration due to their implementational efficiency and flexibility.

For example, \cite{Schftner2008OnTE} show how to extend the least-squares Monte Carlo (LSMC) approach of Longstaff and Schwartz \cite{Longstaff} to calculate exposures and numerically illustrate the performance of the resulting algorithm. In \cite{karlsson2016} and \cite{feng2016}, the stochastic grid bundling method (SGBM) of \cite{jain2015} is applied for exposure calculations and compared to the LSMC algorithm. Their numerical comparison reveals that the noise in the LSMC approach can lead to inaccurate exposures, particularly for tail exposure measures. This called for a thorough theoretical investigation to gain a better understanding of the error in exposure accelerations. The explanation currently available in the literature is that this noise is due to the early-exercise boundary not being approximated well enough. The analysis unfolded in this paper contributes the insight that deficiencies are of a structurally deeper nature. Particularly, we show that unanticipated large exposure errors can even happen in simpler cases without early-exercise features when a good least-squares fit is used.
To clarify these phenomena, we have revealed the dependence of the exposure error on the distribution of the underlyings and the structure of the exposure measure in addition to the $L^2$ function approximation error. This contributes to the understanding of $L^2$ based exposure accelerations, covering both traditional and recent developments, such as SGBM and the deep learning PDE solver of \cite{wunderlich2023}.

Exploiting the theoretical results we have proposed an acceleration method based on the Chebyshev interpolation. Our contribution expands the literature by presenting an acceleration method underpinned by error bounds allowing to control the exposure error through the interpolation error. In contrast to methods \cite{glau19dc,karlsson2016,Schftner2008OnTE,Shen2013ABA} tailoring specific early-exercise pricers for exposure calculations, our method directly leverages the pricing functions already available within the existing pricing and risk infrastructure yielding high operational efficiency.
An approach that is especially related to this work is the Dynamic Chebyshev method to accelerate exposure calculations~\cite{glau2019fast} as both approaches apply Chebyshev polynomials which provide approximations in the uniform norm. The key difference of the current approach is that the Dynamic Chebyshev method is a stand-alone solver, while the method proposed here accelerates existing pricing methods.

Clearly any contribution to accelerate simulation-based exposure computations also applies to the wider realm of financial risk analysis based on MC scenario generation. 
In particular, this means that our analysis can also be used to deepen the understanding of methods in this broader field. For instance, \cite{broadie2015} provide regression-based nested MC method for financial risk estimation. Our analysis based on the function approximation approach complements their analysis based on classical statistical results. While their results apply to a set of risk measures including the probability of a large loss but not value at risk (i.e., potential future exposure), ours include value at risk but not the probability of a large loss.

Methods targeted to risk assessment typically require relatively low accuracy only.
Counterparty exposure calculations, however, are used for both pricing and risk management. Thus, in this setting, a range of target accuracies is of interest, including very precise exposure estimates for pricing purposes.
Our method proposed has the advantage that it can accelerate while keeping any level of accuracy required in counterparty exposure measurement, involving both risk assessment and pricing tasks. 

\section{Conclusion}\label{sec:conclusion}
\label{sec:conc}

We would like to judge the contributions of the paper against the essential list of practical and theoretical requirements the method ought to fulfil. Since financial risk computations are critical to the economy, the reliability of its procedures is of utmost importance.
The reliability requirement entails both \textit{numerical accuracy} and corresponding \textit{proven error bounds}. Rising regulatory demands require  timely computations of the ever increasing number of risk quantities, and so it is crucial that the method  exhibits \emph{fast run-times}. Large financial institutions have diverse derivatives across different markets on their balance sheets. This implies the need to calculate various risk quantities for different product types, and many different models need to be utilized. This makes the \textit{flexibility in model choice, product type, and exposure measure} another requirement for the method. For the method to have a practical impact it is crucial that the costs of its implementation and integration into the existing well-tested infrastructure of financial institutions are small enough.  We refer to this requirement for the method as the \emph{implementational efficiency}.

Employing function approximation theory, we have systematically investigated a conceptually simple approach to speed-up exposure calculations. The approach consists of replacing the expensive pricing function with an efficient alternative. The simplicity of the method and the regularity of pricing functions are key ingredients for the accurate and practically feasible acceleration of exposure calculations. To rigorously assess the effectiveness, we have formalized and analysed the approach in a generic framework admitting different function approximation techniques.

The subtleties discovered in the developed framework reveal that not all approximation methods are equally suitable for the acceleration. The example in Section \ref{sec:example_new} shows that choosing an unsuitable approximation method may result in surprisingly large errors in exposure measurements, which highlights the need for rigorous analysis. While plenty of related works provide the analysis of the pricing function approximation error, the consequent error in exposure measurement is typically only assessed numerically. A systematic analysis of this error had not been provided in the literature to this day.

In order to close this gap, we have derived error bounds for the exposure measurement induced by replacing the reference pricer in terms of the pricing approximation error. In the introduction we have presented a function approximation framework suitable for the rigorous analysis of accelerating counterparty credit exposure calculations.  Propositions \ref{prop:uniform_bound_11} reveals a structural advantage of the uniform norm as it shows that the acceleration error is directly bounded by the function approximation error. As our main result, in Theorem \ref{thm:probErrBnd} we have shown that we can choose the approximation degree, appropriate truncation of domain, and number of simulations while rigorously controlling the overall error and assessing efficiency. Namely, the acceleration achieves an asymptotic speed-up factor of $n^{1/2-\epsilon}$  for any $\varepsilon>0$ with $n$ the number of simulations. In Proposition \ref{prop:lp_bound} we have proved that the induced error is bounded by the pricing error measured in the $L^p$ norm, for $1 \leq p < \infty$ , with the additional dependency on the distribution used for generating market scenarios and the structure of the exposure measure. In the empirical setting of Proposition \ref{prop:emp_lp} we have shown probabilistic finite sample error bounds.

These theoretical findings and the requirements from industry practice listed in the beginning of the section have guided us to design and implement the approach based on the Chebyshev interpolation. In particular, Corollary \ref{cor:cheb_conv} showcases the proven error bounds yielding \emph{theoretical reliability}.
We note that the method maintains separation of pricing and risk tasks, see Scheme \ref{fig:process_flow}.  This eases the integration with the existing infrastructure as any financial institution should have full re-evaluation readily available as a benchmark, which shows the \emph{implementational efficiency} of the method. 
Exploiting this structural advantage, to estimate the performance gain in terms of run-time and accuracy we have evaluated a variety of exposure quantities in different derivatives pricing setups to demonstrate the method's resulting \emph{flexibility across different models, products and exposure measures}. We have employed three stochastic pricing models distinguished by their sophistication levels each calibrated to market data.

We have compared the proposed method to advanced full re-evaluation benchmarks for the key computation of path-wise exposures. For each benchmark, we have employed a suitable efficient pricer, namely, either a finite-difference PDE solver or the COS method \cite{fang2009euro,fang2009}, adapted to the assumption of only access to black-box evaluations of pricers. The results consistently reveal computational savings by orders of magnitude, even exceeding our expectations.  We observe an average speed-up factor of 87 across the 12 considered cases, while maintaining the
order of error of full re-evaluation, see  Figure \ref{fig:SUF}. 
More precisely, maximum relative errors in all cases are below $2\%$, showcasing \emph{numerical accuracy}, see Table \ref{tab:my-table}. The speed-up factors of about 60 mean that computations that used to take hours now take minutes, and those that took minutes now take seconds. This clearly fulfils the requirement of \emph{fast run-times} completing the list of requirements set out in the second paragraph.

For the example of American options in the Heston model, we observed a reduction of the computational time from 11.5 hours to 15 minutes with relative errors below $1.2\%$. We highlight that the computation time in the accelerated Heston case is on the same scale as the time of the finite-difference calculation for the same early-exercise option within the standard BSM model, which takes 10 minutes. The example indicates that the method may well contribute to avoiding oversimplification of risk computations. This is a key step to overcome a bottleneck between advanced mathematical tools for risk management and industry practice which relies on existing well-tested pricing and risk infrastructure.

Overall, the proposed approach provides the required properties of \textit{numerical accuracy}, \textit{proven error bounds}, \textit{fast run-times}, \textit{implementational efficiency}, and \textit{flexibility in model choice, product type, and exposure measure}.

The discussion of efficiency gains in Section \ref{sec:num_exp} showcases how the transparency of the method enables a clear understanding of the composition of the actually observed run-times based on model, option and procedural features. Together with derived error bounds this facilitates identification of the merits and limits of the method and thus also future research.

\section{Outlook}\label{sec:outlook}

On basis of the results obtained, there are multiple promising directions for further development of the method, both in theory and practice. One concrete line of research is exploiting the fact that the method approximates the Greeks very well, along with the price. We have presented preliminary results for two prominent applications in a counterparty credit risk setting, revealing the high potential of the fast and accurate approximations of Greeks. Building upon the promising results achieved within the BSM model in this paper, we believe that the approach yields substantial efficiency improvements when extended to more advanced models. Therefore, further research to explore the efficiency gain of such applications will be beneficial.

Pointing to another research direction, we note that the presented acceleration approach can be combined with variance reduction methods. As the approach exhibits the separation of pricing and risk tasks, we expect incorporating classical variance reduction strategies to provide substantial gains in efficiency with only small implementational costs.

Another benefit of the method which can be further explored is its compatibility with parallel computing.
The computational workload of the simulation-based exposure quantification is pleasingly parallel, i.e., it is straightforward to accelerate it with parallel computing. As the proposed function approximation approach does not modify the computational structure of the simulation-based framework, it can be combined effectively with parallel computing in all discussed settings.

Finally, we offer our outlook on extending the method to address high-dimensional challenges in the context of counterparty risk. 
The Chebyshev interpolation admits a straightforward tensor-based extension to the multivariate case, see \cite{gass2018}, which inherits the appealing theoretical properties from the univariate setting, see also \cite{glau2019improved,potz2020,sauter2011}. Notably, the convergence is again given in terms of the uniform norm so a multivariate analogue of Corollary \ref{cor:cheb_conv} can be derived. A direct application of the tensorised Chebyshev interpolation would lead to an exponential growth of the number of nodal points as the dimension increases, diminishing expected run-time improvements in high-dimensional settings. Here, sparse grid and low-rank tensor techniques are favourable. These have already been shown to effectively circumvent the dimensionality problem in financial applications of Chebyshev interpolation \cite{glau2020,GRZELAK2022,zeron2021}.
Also machine learning methods have the potential to further improve the efficiency of high-dimensional Chebyshev interpolations, for example when being used to construct the low-rank tensor. 
Also on its own machine learning approximations are promising, and the derived results for $L^2$-approximations form a foundation for high-dimensional approaches minimising the mean-squared loss.
We consider tailoring these methodologies for exposure calculations as a promising direction for future research which would open up several avenues for practical applications.

\section*{Acknowledgements}
This work was funded by the EPSRC grant no. EP/T004738/1.\\
This research utilised Queen Mary's Apocrita HPC facility, supported by QMUL Research-IT doi.org/10.5281/zenodo.438045.

We would like to thank the anonymous reviewers for their helpful remarks.

\appendix\normalsize
\section{Full re-evaluation and acceleration through Chebyshev interpolation}\label{sec:cheb_int}

\subsection{Full re-evaluation}\label{sec:fr}

Full re-evaluation is the benchmark approach for obtaining the approximate distribution of counterparty credit exposure, as \cite[Chapter~15.1.4,~p.~414]{gregory2020xva} notes: ``Whilst add-on and analytical approaches still sometimes exist, Monte Carlo simulation of exposure has been considered state-of-the-art for some time.'' The method estimates the distribution of an exposure $X_t = V_t(Z_t)^+$ by an empirical distribution induced by a sample obtained via Monte Carlo (MC) simulation \cite{glasserman2004monte}, in a procedure summarized as follows.

\begin{enumerate}
	\item Samples $\{z_{t}^i\}_{i=1}^n$ of the future market states $Z_t$ are obtained via Monte Carlo simulation at time points of interest.
	\item A derivative is valued for each market scenario, $v_t^i = V_t(z_t^i )$,
yielding the sample $\mathbf{x}_t ={\{x_{t}^i = \max \{ v_t^i, 0 \}\}_{i=1}^n}$ of the exposure $X_t$.
	\begin{enumerate}
            \item[] {For path dependent options, if the option ceases to exist at time-step $t_{u^{*}}$ on $i$-th path, $x_{t_u}^i$ is set to $0$ for all time-steps $u$, with   $u \geq u^{*}$ for barrier options (knock-out) and $u > u^{*}$ for American options (exercise).}
	\end{enumerate}
	\item The distribution $F_{X_t}$ of the exposure is approximated
by the empirical distribution $F_{\mathbf{x}_t}$ induced by the sample $\mathbf{x}_t$.

\end{enumerate}
The name full re-evaluation stems from the fact that the pricing functions need to be called
for every scenario and every time point of interest. Typically, the exposure is of interest at a multitude of times, that is the exposure's time-profile. We note that full re-evaluation can accommodate practical complexities, such as netting and collateral. Furthermore, the simulation-based framework allows to capture a large number of risk factors and their correlations.

\subsection{Chebyshev acceleration}\label{sec:cheby_acc}

In the following, we present the essentials of the Chebyshev interpolation for exposure accelerations. We refer to \cite{trefethen2013} for a comprehensive review of the Chebyshev interpolation and its appealing {theoretical and} practical properties. Recall, the key appeal in the present setting is that it yields a highly efficient function approximation in the uniform norm that is strikingly fast to construct and to evaluate.

The Chebyshev interpolation  of a function $f$ on the interval $\left[-1, 1\right]$ is a polynomial interpolation of degree $N\geq 1$ using the $N+1$ Chebyshev points $\zeta^k = \cos\left(\pi k/N\right)$, $k=0,...,N$.
 The resulting approximation $I_N(f)$ can be written as a sum of the Chebyshev polynomials $T_j(z) = \cos\left(j \arccos(z)\right)$  weighted by the Chebyshev coefficients, i.e.,
\begin{equation*}
I_N(f)(z) = \sum_{j=0}^N  c_j T_j(z), \hspace{0.15cm}\text{for}\hspace{0.15cm} c_j = \frac{2^{\mathbbm{1}_{\{0<j<N\}}}}{N}\sum_{k=0}^N {}{''} f(\zeta^k) T_j(\zeta^k),
\end{equation*}
where $\sum {}{''}$ indicates that the first and last summand are halved. The interpolation of a function $f$ on a general interval $[a,b]$ is achieved by applying the linear transform 
\begin{equation}\label{eq:domain_transform}
\tau : [-1, 1] \to [a,b], \hspace{0.15cm} \tau(z)=b + \frac{a - b}{2}(1-z),
\end{equation}
i.e., by interpolating the function $f \circ \tau$.

We note that the coefficients $c_j$ are given explicitly as a linear transform of the function values at the known interpolation nodes $\zeta^k$ and can be efficiently evaluated (see, e.g., \cite{gentleman1972}). With the coefficients at hand, a fast and stable evaluation of the resulting interpolant is available through, e.g., the Clenshaw algorithm \cite{fox1968chebyshev}. We also highlight that small distortions of function values at interpolation nodes only add small errors to the interpolation even for a high number of nodal points (see \cite[Section~2.3]{gass2018}).
This is important as it ensures the stability of the Chebyshev interpolation also when the function values at the interpolation nodes are calculated numerically, which is typically the case in our context.

The favourable properties outlined above together with the theoretical results of Corollary \ref{cor:cheb_conv} and Theorem \ref{thm:probErrBnd} motivate us to propose the Chebyshev acceleration of full re-evaluation. In the one-dimensional case, this corresponds to inserting the interpolation operator $I_N$ in place of the approximation operator $\mathbf{I}$ in Scheme \ref{fig:process_flow}.

Essential for the analysis of Chebyshev interpolations is the Bernstein ellipse. We recall the definition of a Bernstein ellipse $\mathcal{E}(\left[-1, 1\right], \varrho)$ with parameter $\varrho > 1$ as the open region in the complex plane bounded by the ellipse with foci $\pm 1$ and semiminor and semimajor axis lengths summing to $\varrho$. By $\mathcal{E}(\left[a, b\right], \varrho)$ we denote the linear transformation of $\mathcal{E}(\left[-1, 1\right], \varrho)$ by a map $\tau $ as in \eqref{eq:domain_transform}, see also \cite{gass2018}.

The choice of the interpolation domain is important to ensure optimal performance of the accelerated method. A fixed interpolation domain allows for a full separation of the accelerated pricing and the risk tasks. The proof of Theorem~\ref{thm:probErrBnd} provides details on the asymptotic choice of the domain size, but in practice a more heuristic approach might be taken, for example ensuring that a high percentage of samples lie within the domain. 
This domain can be further reduced by examining the properties of the value function under consideration. For certain options, such as barrier options, the payoff yields a natural bound. In other cases, the asymptotic behaviour of the value functions can be exploited, see also Appendix \ref{sec:calc_met_int_dom_deg}.
As an alternative, the simulation-based approach allows for a bounded interpolation domain. Namely, realised risk factors can be used to determine a suitable bounded interval, as any finite sample is bounded. Such an approach still separates the original pricer from the risk tasks and the theory of Proposition~\ref{prop:uniform_bound_11} directly applies.

To conclude this section, we note that Chebyshev interpolation has already been successfully applied in a variety of different financial settings, including~\cite{glau19iv,glau2020,GRZELAK2022,pachon2018,potz2020,zeron2021}. Recent collaborations further show the interest from practitioners to incorporate efficient Chebyshev-based technologies, see, e.g., \cite{nasdaq2024}. This moreover emphasizes the need for thorough analysis considering the importance of reliable financial risk calculation for the overall economy.

\section{Examples of exposures and their empirical estimates}\label{sec:exp_mea_exa}\label{sec:exp_meas_est}

We note that the quantile-based formulation
\begin{equation}\label{eq:qbrm}
\rho_m(X) = \int_0^1 Q_X(u)m(du),
\end{equation}
where $m$ is a probability measure on $(0,1)$,
allows for simultaneous analysis of exposure measures used in different areas of counterparty credit risk management practice, see also~\cite{cont2010}.

With $Q_{\mathbf{x}}(u) = x^{\left(\lfloor nu \rfloor + 1\right)}$, where $x^{(k)}$ is the $k$-th smallest element of the set $\{x^i\}_{i=1}^n$, we have the empirical estimate
\begin{equation}\label{eq:rho_m_emp_est}
\widehat{\rho}_m(\mathbf{x}) = \int_0^1 Q_{\mathbf{x}}(u)m(du) = \int_0^1 x^{\left(\lfloor nu \rfloor + 1\right)} m(du) = \sum_{i = 1}^n w_{n,i} x^{(i)},
\end{equation}
where $w_{n,i} = m\left(\frac{i-1}{n},\frac{i}{n}\right]$ for $i=1,...,n-1,$ and $w_{n,n} = m\left(\frac{n-1}{n},1\right)$.
  Note that calculating \eqref{eq:rho_m_emp_est} in general requires sorting a sample, amounting to an effort scaling linearly in $n\ln n$.

The form \eqref{eq:qbrm} covers many of the exposure measures used in practice for counterparty credit risk management and from \eqref{eq:rho_m_emp_est} we obtain their empirical estimators.
Let $\mathbf{X} = (X_1, ..., X_n)$ be the random samples from the exposure $X$, i.e., the sequence of independent random variables all distributed as $X$, and $\mathbf{x} = \{x^i\}_{i=1}^n$ the data set representing observations of $\mathbf{X}$.

	\emph{Expected exposure (EE)}, for instance, corresponds to choosing $m$ as the uniform distribution over $(0, 1)$,
\begin{equation}\label{eq:ee}
\mathrm{EE}(X) =  \int_0^1 Q_X(u)du = \mathbb{E} \left[ X \right].
\end{equation}
The expected exposure (under a risk-neutral measure) plays an important role in the calculation of a credit valuation adjustment: an adjustment made to the theoretical risk-neutral value of the trade to account for a counterparty risk, as discussed in Section \ref{sec:cva_sens}.

The \emph{Empirical EE} is given by
\begin{equation*}\label{eq:empirical_EE}
\widehat{\mathrm{EE}}(\mathbf{x}) = \frac{1}{n} \sum_{i = 1}^n x^i.
\end{equation*}
By the law of large numbers, if $\mathbb{E}\left[X^2\right]<\infty$, $\widehat{\mathrm{EE}}(\mathbf{X})$ converges to $\mathrm{EE}(X)$ almost surely (and therefore also converges in probability) as $n \to \infty$, and by the central limit theorem we have
\begin{equation}\label{eq:EE_clt}
\sqrt{n}\left(\widehat{\mathrm{EE}}(\mathbf{X}) -  \mathrm{EE}(X)\right) \xrightarrow[]{D} \mathcal{N}\left(0, \sigma^2_{\mathrm{EE}} \right) \text{ as } n \to \infty,
\end{equation}
where $\sigma^2_{\mathrm{EE}} = \textrm{Var}\left[ X \right]$.

	\emph{Potential Future Exposure (PFE)} corresponds to the choice of Dirac measure for $m$, $m = \delta_{\alpha}$ for a fixed $\alpha \in (0,1)$, that is
\begin{equation}\label{eq:pfe}
\mathrm{PFE}_{\alpha} (X) = Q_X(\alpha).
\end{equation}
	Thus, the probability of the exposure exceeding $\mathrm{PFE}_{\alpha}$ is equal to $1-\alpha$. The potential future exposure is often used for monitoring that the exposure to a certain counteparty is below the prespecified credit limit, a fundamental approach in counterparty credit risk management. A conditional version of PFE is used as the amount of collateral to be posted when trading in the presence of initial margins, see, e.g., \cite{Andersen2017IM,Andersen2017MPOR}.
	
	Mathematically, PFE coincides with the famous market risk measure - Value at Risk (VaR). Limitations of VaR (see, e.g., \cite{acerbi2002}, for an early reference), particularly its inability to capture the tail losses beyond the selected confidence level, have recently been addressed in the market risk regulation. Namely, moving from VaR to Expected Shortfall, i.e., average loss beyond VaR, is one of the key changes introduced by the Fundamental Review of the Trading Book \cite{bcbs2014frtb}.

The \emph{Empirical PFE}  is given by
\begin{equation*}\label{eq:empirical_PFE}
\widehat{\mathrm{PFE}}_{\alpha}(\mathbf{x}) = x^{(\lfloor n\alpha \rfloor + 1)}.
\end{equation*}
From \cite{sterfling1980} we know that $\widehat{\mathrm{PFE}}_{\alpha}(\mathbf{X}) \to \mathrm{PFE}_{\alpha}(X)$ with probability 1 as $n \to \infty$, and, if $X$ has a density $f_{X}(\cdot)$ in the neighbourhood of $\mathrm{PFE}_{\alpha}(X)$ and $f_{X}(\mathrm{PFE}_{\alpha}(X))>0$,
\begin{equation}\label{eq:PFE_clt}
\sqrt{n}\left(\widehat{\mathrm{PFE}}_{\alpha}(\mathbf{X}) -  \mathrm{PFE}_{\alpha}(X)\right) \xrightarrow[]{D} \mathcal{N}\left(0, \sigma^2_{\mathrm{PFE}_{\alpha}} \right) \text{ as } n \to \infty,
\end{equation}
where $\sigma^2_{\mathrm{PFE}_{\alpha}} = \frac{\alpha(1-\alpha)}{f_{X}(\mathrm{PFE}_{\alpha}(X))^2}$.

Defficiencies of PFE in the counterparty risk setting,  have been discussed in \cite{kenyon2018}. Therein, basing counterparty trading limits on average exposure beyond PFE rather than PFE has been proposed.
	
\emph{Credit Expected Shortfall (CES)} corresponds to choosing $m$ as the uniform distribution over $(\alpha, 1)$, where $\alpha \in (0,1)$ is fixed,
\begin{equation}\label{eq:ces}
\mathrm{CES}_{\alpha}(X) = \frac{1}{1-\alpha} \int_{\alpha}^1 Q_X(u)du,
\end{equation}
	and hence accounts for average exposure beyond the PFE level. We observe that some major clearing houses already base their margining methodologies on (conditional) CES.	

The     \emph{Empirical CES}  is given by
\begin{equation*}\label{eq:empirical_CES}
\widehat{\mathrm{CES}}_{\alpha}(\mathbf{x}) = \frac{1}{n(1-\alpha)}\left( x^{(\lfloor n\alpha \rfloor + 1)}(\lfloor n\alpha \rfloor + 1 - n\alpha) + \sum_{i=\lfloor n\alpha \rfloor + 2}^n x^{(i)} \right).
\end{equation*}
From \cite{trinidade2007} we have that when $\mathbb{E}\left[X^2\right]<\infty$, $\widehat{\mathrm{CES}}_{\alpha}(\mathbf{X})$ is a consistent estimator of $\mathrm{CES}_{\alpha}(X)$, i.e., $\widehat{\mathrm{CES}}_{\alpha}(\mathbf{X}) \to \mathrm{CES}_{\alpha}(X)$ in probability as $n \to \infty$. Furthermore, if $X$ has a density $f_{X}(\cdot)$ in the neighborhood of $\mathrm{PFE}_{\alpha}(X)$, then
\begin{equation}\label{eq:CES_clt}
\sqrt{n}\left(\widehat{\mathrm{CES}}_{\alpha}(\mathbf{X}) -  \mathrm{CES}_{\alpha}(X) \right) \xrightarrow[]{D} \mathcal{N}\left(0, \sigma_{\mathrm{CES}_{\alpha}}^2 \right) \text{ as } n \to \infty,
\end{equation}
where
\begin{equation*}
\sigma_{\mathrm{CES}_{\alpha}}^2 = \lim_{n \to \infty} n \textrm{Var}\left[ \widehat{\mathrm{CES}}_{\alpha}(\mathbf{X}) \right] = \frac{1}{(1-\alpha)^2} \textrm{Var}\left[ \left[X - \mathrm{PFE}_{\alpha}(X) \right] \cdot \mathbbm{1}_{\left\{X>\mathrm{PFE}_{\alpha}(X)\right\}} \right].
\end{equation*}

Let us also mention here that \eqref{eq:qbrm} includes \emph{spectral exposure measures (SEM)} \cite{ACERBI2002srm} which generalize CES and correspond to choosing $m(du) = \phi(u)du$, where weight function $\phi:[0,1] \to [0, \infty)$ is an increasing density on $[0,1]$,
\begin{equation}\label{eq:sem}
\rho_{\phi}(X) = \int_0^1 Q_X(u)\phi(u)du.
\end{equation}
While $\mathrm{CES}_{\alpha}$ assigns weights to all PFEs beyond the confidence level $\alpha$ equally, SEM allow these weights to be freely chosen. Increasing weight functions imply that marginal costs are increasing with exposure.

\section{Deferred Proofs and Calculations}\label{app:1l_q}
\subsection{Proof of Theorem~\texorpdfstring{\ref{eq:peb}}{2.1}}\label{app:probErrBoundProof}

For $\kappa>0$, set $L =\left( \left(\ln{n} + \ln{\alpha} +  \kappa^2/\Large(18\sigma_\rho^2\Large)\right)/\beta \right)^{1/\gamma}$, $N = \left\lceil L^{{\vartheta}} - \ln\left(\kappa/(3a\sqrt{n})\right)/b\right\rceil^{{D}}$, and $M = \left\lceil n \mathfrak{c}^2 (1+\ln{N})^2\sigma_\text{ref}^2 \left(18\ln(N)/\kappa^2 + \sigma_\rho^{-2} \right)  \right\rceil$.  Note that we then have $L^{{\vartheta}} - {\sqrt[D]{N}} \leq \ln\left(\kappa/(3a\sqrt{n})\right)/b$.

We first show the probabilistic estimate. 
Consider the event $F$ of all samples lying within $\Omega_L$, and the error of the full re-evaluation and the reference pricer being bounded by $\kappa/(3\sqrt{n})$. Namely,  
$F = A\cap B \cap C$, where $A = \cap_{i=1}^n\left\{Z_t^i \in \Omega_L\right\}$, $B = \left\{\vert \rho(X_t) - \widehat{\rho}(\mathbf{X}_t) \vert \leq \kappa/(3\sqrt{n})\right\}$,  $C=  \big\{\max\limits_{1 \leq j \leq N} |V_t(\zeta_t^j) - V_{t}^M(\zeta_t^j)| \leq \kappa/(3\sqrt{n}\mathfrak{c} (1+\ln N)) \big\}$.

In the event F, we can apply the overall error split \eqref{eq:err_split}, which together with assumptions (A3) and (A4) yields
\begin{align*}
\left\vert \rho(X_t) - \widehat{\rho}(\mathbf{Y}_t)\right\vert &\leq \left\vert \rho(X_t) - \widehat{\rho}(\mathbf{X}_t)\right\vert +\Vert V_t{^+} - (I_N V_{t}^{M}){^+} \Vert_{\infty,\Omega_L}\\
&\leq\left\vert \rho(X_t) - \widehat{\rho}(\mathbf{X}_t)\right\vert +\Vert V_t - I_N V_{t} \Vert_{\infty,\Omega_L}+\Vert I_N V_{t} - I_N V_{t}^{M} \Vert_{\infty,\Omega_L}\\
& \leq  \frac{\kappa}{3\sqrt{n}}  + a e^{b(L^{\vartheta} - {\sqrt[D]{N}})}+ \frac{\kappa}{3\sqrt{n}} \leq  \frac{\kappa}{\sqrt{n}}.
\end{align*}

To estimate the probability of the event $F$, we first note that $\mathbb{P}(F) \geq \mathbb{P}(A) + \mathbb{P}(B) + \mathbb{P}(C) - 2$.
Under assumption (A2), for $n$ large enough, we have
\begin{equation*}
\mathbb{P}(B)   \geq 1-n\mathbb{P}(Z_t\notin I) \geq 1-n \alpha e^{-\beta L^\gamma}
 = 1 - e^{-\frac{\kappa^2}{18\sigma_\rho^2}}
\end{equation*}
while by assumption (A1)
\begin{align*}
\mathbb{P}(C) = 2 \Phi\left(\frac{\kappa}{3\sigma_\rho}\right) - 1
\geq 1 - e^{-\frac{\kappa^2}{18\sigma_\rho^2}} 
\end{align*}
where $\Phi$ is the distribution function of the standard normal distribution.
We also have by (A1)
\begin{align*}
\mathbb{P}(D) & \geq 1-\sum_{j=1}^N \mathbb{P}\left( |V_t(\zeta_t^j) - V_{t}^M(\zeta_t^j)| > \frac{\kappa}{3\sqrt{n}\mathfrak{c} (1+\ln N)}\right)
\\
&\geq
1-N e^{- \frac{\kappa^2  M}{18 n \mathfrak{c}^2 (1+\ln N)^2 \bar{\sigma}^2} }\geq 1 - e^{-\frac{\kappa^2}{18\sigma_\rho^2}},
\end{align*}
 which completes the proof of the probabilistic estimate.
 
Under Assumption (A5), we estimate the effort to construct the function approximation and evaluate $\widehat{\rho}(\mathbf{Y_t})$ as follows. We combine the effort to:
\begin{enumerate}
\item Evaluate $V_{t}^M$ at $N=\left\lceil \left( \frac{\ln{n} + \ln{\alpha} +  \frac{\kappa^2}{18\sigma_\rho^2}}{\beta} \right)^{{{\vartheta}}/\gamma} - \frac{\ln\left(\frac{\kappa}{3a\sqrt{n}}\right)}{b}\right\rceil^{{D}}$  points, the effort of each evaluation scaling with $M = \left\lceil n \mathfrak{c}^2 (1+\ln{N})^2\bar{\sigma}^2 \left(\frac{18\ln(N)}{\kappa^2} + \sigma_\rho^{-2} \right)  \right\rceil$,
\item Construct $\mathbf{I}_N V_t^M $, the effort of which scales with $N^{{\xi}} $, and 
\item Evaluate $\mathbf{I}_N V_t^M $ at $n$ points, the effort of each evaluation scaling with $N$.
\end{enumerate}
The first item is dominating the effort and yields an effort scaling with no more than $\Lambda(n) = n \ln{n}^{{D}\max\{1, {{\vartheta}}/\gamma\}} (\ln^2{n})^3 \leq c n^{1+\varepsilon}$ for any $\varepsilon>0$. 

\subsection{Class of pricing problems fulfilling Assumption~\texorpdfstring{\eqref{ass2}}{(A3)}}\label{sec:ass_a3}

We will assume that the asset prices evolve as exponential semimartingales. Let the driving process be an $\mathbb{R}^d$-valued semimartingale $X_t = (X^1_t, \dots, X^d_t)^\top$ and $S_t = (S_t^1, \dots, S_t^d)^\top$ be the vector of asset price processes; then each component $S_t^i$ of $S_t$ is modelled as an exponential semimartingale, i.e.
\begin{equation*}
S^i_t = S^i_0 \exp X^i_t, \quad 0 \leq t \leq T, \quad 1 \leq i \leq d,
\end{equation*}
where $X^i$ is an $\mathbb{R}$-valued semimartingale.

We will price options with payoff $\Phi(S_0e^{X_T}) = f(X_T - s)$ at maturity $T$, and assume $f$ is a measurable function $f: \mathbb{R}^d \to \mathbb{R}_+$, and $s = (s^1, \dots, s^d) \in \mathbb{R}^d$ with $s^i = - \log S^i_0$. Furthermore, we assume for simplicity that interest rate and dividend yield are zero. Then, time-$0$ price of an option is given by $V(s) = V_f(s;X) = E[f(X_T - s)]$. We will also make use of the dampened payoff function,
\begin{equation*}
g(x) := e^{-\langle R,x \rangle} f(x) \quad \text{for } x \in \mathbb{R}^d,
\end{equation*}
where $R \in \R^D$ serves as a dampening coefficient.

We will make the following assumptions, which are adapted from~\cite[Assumptions (A1)--(A4)]{gass2018} to the present setting.

\noindent\textbf{Conditions B:}
\begin{enumerate}
	\item[\textbf{(B1)}] $g \in L^1(\R^D)$,
	\item[\textbf{(B2)}] $\widehat{e^{\langle R,\cdot\rangle} P_{X_T}} \in L^1(\R^D)$,
	\item[\textbf{(B3)}] $M_{X_T}(R) = \mathbb{E}(e^{RX_T})< \infty$, and
	\item[\textbf{(B4)}] There exist constants $c_1,c_2>0$ and $\alpha \in (1,2]$ such that $|\phi_{X_T}(u-iR)| \leq c_1 e^{-c_2\Vert u  \Vert_2^{\alpha}}$ for all $u \in \R^D$, where $\phi_{X_T}(u) = \mathbb{E}(e^{i\langle u,X_T\rangle})$ is the characteristic function of $X_T$. 
\end{enumerate}

Conditions (B1)--(B4) are typically satisfied, and a discussion of various models and payoffs that fit this framework can be found in~\cite[Chapter 4.3]{gass2018}.

\begin{lemma}\label{lem:fulfilling_assumption_a3}
If Conditions (B1)--(B4) are in force, our condition~\eqref{ass2} on the domain size dependence is satisfied for the standard multivariate Chebyshev interpolation with a $N$  interpolation points on a hypercube domain $\Omega_L =  [-L,L]^D$, i.e.
\[
\hspace{0.83cm}\Vert V_T - \mathbf{I}_N V_T \Vert_{L^{\infty}(\Omega_L)}\leq a e^{b(L^{\vartheta} - \sqrt[D]{N})} 
%\hspace{0.15cm}
\text{ with }
%\hspace{0.15cm}
a,b>0, \text{ and }\vartheta = \frac{\alpha}{\alpha-1} \geq 2. 
\]
\end{lemma}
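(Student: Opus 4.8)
The plan is to reduce the multivariate Chebyshev interpolation error on the hypercube $\Omega_L = [-L,L]^D$ to the classical Bernstein-ellipse estimate (the first part of Corollary~\ref{cor:cheb_conv}), applied after rescaling $\Omega_L$ to $[-1,1]^D$, and then to track how the rescaling together with the decay bound (B4) on the characteristic function controls the constant $C = \sup_{z\in\mathcal{E}_\varrho}|V_T(z)|$ that appears in the exponential estimate. Concretely, I would first recall from \cite{gass2018} that under Conditions (B1)--(B4) the price function $V_T(s) = \mathbb{E}[f(X_T - s)]$ admits an analytic extension to a tube $\{z\in\mathbb{C}^D : |\operatorname{Im} z_i| < \sigma\}$ for some $\sigma>0$, with the bound on this extension governed by the Fourier representation $V_T(s) = (2\pi)^{-D}\int_{\mathbb{R}^D} \widehat{g}(u)\,\phi_{X_T}(u-iR)\,e^{i\langle u, s\rangle}\,du$ and hence by the integrability furnished by (B1)--(B2) and the Gaussian-type decay $|\phi_{X_T}(u-iR)| \le c_1 e^{-c_2\|u\|_2^\alpha}$ from (B4).

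Next I would carry out the domain rescaling. Writing $\tau_L : [-1,1]^D \to [-L,L]^D$, $\tau_L(z) = Lz$, the interpolant $\mathbf{I}_N V_T$ on $\Omega_L$ corresponds to Chebyshev interpolation of $V_T\circ\tau_L$ on $[-1,1]^D$. A Bernstein ellipse $\mathcal{E}([-1,1],\varrho)$ has semi-axis sum $\varrho$, so it is contained in the strip $\{|\operatorname{Im} w| < (\varrho - \varrho^{-1})/2\}$; under $\tau_L$ this maps into $\{|\operatorname{Im} z| < L(\varrho-\varrho^{-1})/2\}$. To keep $V_T\circ\tau_L$ analytic on the polyellipse $\mathcal{E}([-1,1],\varrho)^D$ we therefore need $L(\varrho-\varrho^{-1})/2 \le \sigma$, i.e. we may take $\varrho - \varrho^{-1} \sim 2\sigma/L$, so $\ln\varrho \sim \sigma/L$ for large $L$. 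Feeding this into the one-dimensional tensorised estimate $\|V_T\circ\tau_L - I_N(V_T\circ\tau_L)\|_{\infty,[-1,1]^D} \le \widetilde{C}\, D\, \varrho^{-N^{1/D}}/(\varrho-1)$ (the standard tensor-Chebyshev bound, whose one-dimensional building block is Corollary~\ref{cor:cheb_conv}(1)) gives an exponential factor $\varrho^{-N^{1/D}} = e^{-N^{1/D}\ln\varrho} \sim e^{-(\sigma/L)\,N^{1/D}}$. That is not yet of the claimed form; the point is that one is free to enlarge the tube — or rather, to let $\varrho$ grow slowly with $L$ — and the sharp trade-off between the growth of the prefactor $C=\sup_{\mathcal{E}_\varrho}|V_T|$ and the gain $\varrho^{-N^{1/D}}$ is exactly what produces the exponent $\vartheta = \alpha/(\alpha-1)$.

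The heart of the matter, and the step I expect to be the main obstacle, is this optimisation over $\varrho$ (equivalently over the width of the analyticity tube) using the precise decay order $\alpha$ in (B4). On the polyellipse of parameter $\varrho$ the extension satisfies, via the Fourier integral, a bound of the form $\sup_{\mathcal{E}_\varrho}|V_T\circ\tau_L| \le c\int_{\mathbb{R}^D} |\widehat{g}(u)|\, e^{c_2' \|u\| \cdot L(\varrho-\varrho^{-1})}\, e^{-c_2\|u\|^\alpha}\,du$, where the growth term comes from $|e^{i\langle u, z\rangle}| = e^{-\langle u,\operatorname{Im} z\rangle}$ with $\|\operatorname{Im} z\|\lesssim L(\varrho-\varrho^{-1})$. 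Evaluating this integral by a Laplace/saddle-point estimate, the competition between the linear-in-$\|u\|$ exponential growth and the $\|u\|^\alpha$ decay yields a bound of order $\exp\big(c''\,(L(\varrho-\varrho^{-1}))^{\alpha/(\alpha-1)}\big)$, since the conjugate exponent of $\alpha$ is $\alpha/(\alpha-1)$. Combining with the interpolation gain, the total error is bounded by $\exp\big(c''(L\delta)^{\alpha/(\alpha-1)} - N^{1/D}\ln(1+\delta)\big)$ where $\delta = \varrho - 1$ is a free parameter; choosing $\delta$ to be a suitable fixed constant (independent of $L$ and $N$) then gives $\|V_T - \mathbf{I}_N V_T\|_{L^\infty(\Omega_L)} \le a\, e^{b(L^\vartheta - \sqrt[D]{N})}$ with $\vartheta = \alpha/(\alpha-1)$, and since $\alpha\in(1,2]$ we have $\vartheta \ge 2$ as required. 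I would finish by noting that the constants $a,b$ depend only on $c_1,c_2,\alpha,D$ and $\|\widehat{g}\|_{L^1}$, and refer to Appendices~B and~C of \cite{gass2018} for the bookkeeping of how $\tau_L$ modifies the constants, so that the argument is fully aligned with the constructions used there. The delicate points to get right are: keeping the $L$-dependence of the prefactor separate from the $N$-dependence (they must land in the two separate terms $L^\vartheta$ and $\sqrt[D]{N}$ of the exponent), and making sure the tensor-product step does not spoil the exponent beyond the harmless factor $D$ absorbed into $a$.
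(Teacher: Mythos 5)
Your proposal is correct and follows essentially the same route as the paper: analytic extension of $V_T$ via the Fourier valuation formula to a generalized Bernstein ellipse with a \emph{fixed} eccentricity parameter $\varrho$, a bound on $\sup_{\mathcal{E}_\varrho}|V_T|$ of order $\exp\bigl(c\,L^{\alpha/(\alpha-1)}\bigr)$ obtained from the trade-off between the linear exponential growth $e^{\langle u,\Im z\rangle}$ and the decay $e^{-c_2\Vert u\Vert_2^\alpha}$ from (B4) (the paper runs this as a Cauchy--Schwarz plus Young's-inequality step, which is your saddle-point computation made explicit), and finally the standard tensorised Chebyshev convergence estimate. The only cosmetic difference is your initial detour through a fixed-width analyticity tube before correcting to an ellipse whose imaginary extent scales with $L$, which is where the paper starts directly.
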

\begin{proof}
Under the assumptions, the option pricing function $V:\Omega_L\to\R$ can be analytically extended to a generalized Bernstein ellipse $B(\Omega_L, \varrho)$, see~\cite[Theorem 3.2]{gass2018}. The extension is given by the Fourier transform valuation formula, i.e.,
\begin{equation}\label{eq:valuation}
V(z) = \frac{e^{-\langle R,z\rangle}}{(2\pi)^D} \int_{\R^D} e^{-i\langle u,z\rangle}\phi_{X_T}(u-iR)\widehat{f}(iR-u)du,
\end{equation}
for $z \in B(\Omega_L, \varrho)$. Here $\hat{f}(u) = \int_{\mathbb{R}^D} e^{i\langle u, x\rangle}f(x)dx $ is the Fourier transform of $f$.
As $g \in L^1(\R^D)$  by (B1), the Riemann-Lebesgue lemma yields uniform boundedness of $\widehat{f}(iR-u)$ in $u$, and in particular we have
\begin{equation}
|\widehat{f}(iR-u)| = |\widehat{g}(-u)| \leq \Vert g \Vert_{L^1(\R^D)} < \infty, \quad \forall u \in \R^D.
\end{equation}
Together with the assumption (B4) this yields the bound
\begin{equation}
|V(z)| \leq \frac{c_1 e^{-\langle R,\Re(z) \rangle}}{(2\pi)^D} \|g\|_{L^1(\mathbb{R}^D)}\int_{\R^D}  e^{-c_2 \Vert u\Vert_2^{\alpha} + \langle u,\Im(z)\rangle}du.
\end{equation}
In the following we put $x = \Re(z)$ and $y = \Im(z)$.

We next estimate the following integral,
\begin{equation}
I = \int_{\R^D}  e^{-c_2 \Vert u\Vert_2^{\alpha} + \langle u,y\rangle}du.
\end{equation}
Employing Cauchy-Schwarz then Young inequality, we see
\begin{equation}
\langle u, y\rangle \leq \Vert u \Vert_2 \Vert y \Vert_2 \leq \frac{\lambda}{\alpha} \Vert u \Vert_2^\alpha + \frac{\alpha-1}{\alpha}\lambda^{-\frac{1}{\alpha-1}} \Vert y \Vert_2^{\frac{\alpha}{\alpha-1}},
\end{equation}
for any $\lambda>0$, and so
\begin{equation}
I \leq e^{\gamma_1 \Vert y \Vert_2^{\frac{\alpha}{\alpha-1}}}\int_{\R^D}  e^{-\gamma_2 \Vert u\Vert_2^{\alpha}}du = 2\pi^{\frac{D}{2}} \frac{\Gamma(D/\alpha)}{\Gamma(D/2)} \gamma_2^{1-\frac{D}{\alpha}}e^{\gamma_1 \Vert y \Vert_2^{\frac{\alpha}{\alpha-1}}}
\end{equation}
where $\gamma_1 = (\alpha-1)/(\alpha\lambda^{1/(\alpha-1)})$ and we choose $\lambda$ so that $\gamma_2 = c_2 - \frac{\lambda}{\alpha}>0$.
The equality follows from a direct calculation of the integral with the change to spherical coordinates for the radial integrand.

Combining the above calculations, we obtain overall
\begin{equation}
|V(z)| \leq \frac{c_1 \gamma_2^{1-\frac{D}{\alpha}}}{2^{D-1}\pi^{D/2}} \frac{\Gamma(D/\alpha)}{\Gamma(D/2)} e^{\gamma_1 \Vert y \Vert_2^{\frac{\alpha}{\alpha-1}} -\langle R,x \rangle}.
\end{equation}

Now, for $z = x + iy \in B(\Omega_L, \varrho)$, we first note $y_i^2 \leq (\xmax_i-\xmin_i)^2(\varrho_i - \varrho_i^{-1})^2/16 \leq L^2(\varrho_i - \varrho_i^{-1})^2/4$, so that we have an upper bound
\begin{equation}
\Vert y \Vert_2^{\frac{\alpha}{\alpha-1}} \leq \left(\sum_i^{D} (\varrho_i - \varrho_i^{-1})^2/4 L^2 \right)^{\frac{\alpha}{2(\alpha-1)}} = a_1 L^{\frac{\alpha}{\alpha-1}}.
\end{equation}
for $a_1 = \left(\sum_i^{D} (\varrho_i - \varrho_i^{-1})^2/4 \right)^{\frac{\alpha}{2(\alpha-1)}} $.
%where we put $a_i^2 = (\varrho_i - \varrho_i^{-1})^2/4$ and $L_i = \xmax_i-\xmin_i$. 
Similarly, $x_i^2 \leq  (\xmax_i-\xmin_i)^2(\varrho_i + \varrho_i^{-1})^2/16 \leq L^2 (\varrho_i + \varrho_i^{-1})^2/4$, i.e., $|x_i|\leq L (\varrho_i + \varrho_i^{-1})/2$, and so
\begin{equation}
\langle R, x \rangle \geq -\sum_i^D |R_i|L (\varrho_i + \varrho_i^{-1})/2 =  -a_2  L,
\end{equation}
for $a_2=\sum_i^D |R_i| (\varrho_i + \varrho_i^{-1})/2$.

Finally, we have
\begin{equation}
|V(z)| \leq \Vert g\Vert_{L^1(\R^D)}\frac{c_1 \gamma_2^{1-\frac{D}{\alpha}}}{2^{D-1}\pi^{D/2}} \frac{\Gamma(D/\alpha)}{\Gamma(D/2)} e^{\gamma_1 a_1 L^{\frac{\alpha}{\alpha-1}} + a_2L} \leq c_3e^{c_4L^{\frac{\alpha}{\alpha-1}}}.
\end{equation}

The final results follows from standard Chebyshev convergence results, e.g.~\cite[Corollary 2]{gass2018}.
\end{proof}

\subsection{Proof of Proposition \ref{prop:emp_lp}}\label{app:proof_emp_lp}

For a vector $v=(v_i)_{i=1}^{n} \in\mathbb{R}^n$, we consider the $\ell^p$ norm ($1\leq p<\infty$) as $\|v\|_{\ell^p} = \left(\sum_{i=1}^n |v_i|^p\right)^{1/p}$.

(a) For any $c> 0$, $S_c = \{z \in \mathbb{R} : |V_t{^+}(z) - U_t{^+}(z)| > c\}$, $\lambda(S_c) \leq \frac{\Vert V_t{^+} - U_t{^+} \Vert_{L^p(\mathbb{R})}^p}{c^p} \leq \frac{\Vert V_t - U_t \Vert_{L^p(\mathbb{R})}^p}{c^p}$, where $\lambda$ is the Lebesgue measure, and so
\begin{equation*}
\mathbb{P}\left(|V_t(Z_t){^+}  - U_t(Z_t){^+} |>c\right) = \mathbb{P}(Z_t \in S_c) = \int_{S_c} f_{Z_t}(z)dz \leq \frac{\Vert f_{Z_t} \Vert_{\infty}\Vert V_t - U_t \Vert_{L^p(\mathbb{R})}^p}{c^p},
\end{equation*}
and, using the union bound:
\begin{equation*}
\mathbb{P}\left(\bigcup_{i=1}^n \{ |V_t(Z_t^i){^+}  - U_t(Z_t^i){^+} |>c\}\right)\leq n \frac{\Vert f_{Z_t} \Vert_{\infty}\Vert V_t - U_t \Vert_{L^p(\mathbb{R})}^p}{c^p}.
\end{equation*}
Thus,
\begin{align*}
\mathbb{P}\left(\max_{i=1,\ldots,n} |X_t^i - Y_t^i| \leq  c\right) &= 1-\mathbb{P}\left(\bigcup_{i=1}^n\{|V_t(Z_t^i){^+}  - U_t(Z_t^i){^+} |>c\}\right) \\&\geq 1 -n \frac{\Vert f_{Z_t} \Vert_{\infty}\Vert V_t - U_t \Vert_{L^p(\mathbb{R})}^p}{c^p}.
\end{align*}
Taking $c=\left(\frac{\|f_{Z_t}\|_\infty}{\eta}\right)^{1/p} n^{1/p} \Vert V_t\ - U_t \Vert_{L^p(\mathbb{R})}$ for $\eta \in (0, 1]$, we have
\begin{align*}
\mathbb{P}\left(\max_{i=1,\ldots,n} |X_t^i - Y_t^i| \leq  \left(\frac{\|f_{Z_t}\|_\infty}{\eta}\right)^{1/p} n^{1/p}\Vert V_t - U_t \Vert_{L^p(\mathbb{R})}\right) \geq 1-\eta.
\end{align*}
By the proof of Proposition \ref{prop:uniform_bound_11} we conclude that 
\begin{equation*}
\left|\widehat{\rho}(\mathbf{X}_t) - \widehat{\rho}(\mathbf{Y}_t)\right| \leq \left(\frac{\|f_{Z_t}\|_\infty}{\eta}\right)^{1/p} n^{1/p} \|V_t - U_t\|_{L^p(\mathbb{R})}
\end{equation*}
with probability at least $1-\eta$.

(b) {Given the density $f_m$, the empirical quantile-based exposure measure~\eqref{eq:rho_m_emp_est} is given as
\[
\widehat{\rho}_m(\mathbf{X}_t) = \sum_{i=1}^n w_{n, i} X_t^{(i)},
\]
with $w_{n, i} = \int_{\frac{i-1}{n}}^{\frac{i}{n}} f_m(u)\mathrm{d}u$. Thus the sample error is 
\begin{align*}
|\widehat{\rho}_m(\mathbf{X}_t) - \widehat{\rho}_m(\mathbf{Y}_t)| \leq
\sum_{i=1}^n w_{n, i} |X_t^{(i)}-Y_t^{(i)}|
\leq \|( w_{n, i})_i\|_q \|(X_t^{(i)}-Y_t^{(i)})_i\|_{\ell^p}
\end{align*}
where we have applied H\"older's inequality for $1/p + 1/q = 1$.
Lemma~\ref{lem:ord_diffs} shows that the difference of two ordered sequences is no larger than the difference of two unordered sequence:
\begin{align*}
\|(\mathbf{X}_t^{(i)}-\mathbf{Y}_t^{(i)})_i\|_{\ell^p} \leq  \|\mathbf{X}_t-\mathbf{Y}_t\|_{\ell^p}.
\end{align*}

Furthermore, we can apply H\"older's inequality on the product $1\cdot f(x)$ to estimate 
\begin{align*}
\|( w_{n,i })_i\|_q^q &= \sum_{i=1}^n\left(\int_{\frac{i-1}{n}}^{\frac{i}{n}} f_m(u)du\right)^q \leq \sum_{i=1}^n\int_{\frac{i-1}{n}}^{\frac{i}{n}} f_m(u)^q du n^{-q/p} = n^{-q/p}  \int_0^1 f_m(u)^q du,
\end{align*}
i.e. $\|( w_{i,n})_i\|_q \leq n^{-1/p} \|f_m\|_{L^q}$. 

Putting both estimates together, we have 
\[
|\widehat{\rho}_m(\mathbf{X}_t) - \widehat{\rho}_m(\mathbf{Y}_t)| \leq
\|f_m\|_{L^q} \, \frac{1}{\sqrt[p]{n}}\|\mathbf{X}_t-\mathbf{Y}_t\|_{\ell^p}.
\]
For $\eta \in (0,1]$, by applying  Hoeffding's inequality, we obtain
\begin{align*}\label{eq:hoeff_bound}
\frac{1}{n} \Vert \mathbf{X}_t - \mathbf{Y}_t \Vert_{\ell^p}^p &\leq \Vert \mathbf{X}_t - \mathbf{Y}_t \Vert_{L^p(\Omega)}^p +  \sqrt{\frac{\ln(1/\eta)}{n}} \Vert V_t{^+} - U_t{^+}\Vert_{\infty}^p %\\
%&\leq \Vert X_t - Y_t \Vert_{L^p(\Omega)}^p +  \sqrt{\frac{\ln(1/\eta)}{n}} \Vert V_t - U_t\Vert_{\infty}^p,
\end{align*}
with probability at least $1 - \eta$, from which
\begin{equation*}
\begin{split}
|\widehat{\rho}_m(\mathbf{X}_t) - \widehat{\rho}_m(\mathbf{Y}_t)|  &\leq \|f_m\|_{L^q} \Vert \mathbf{X}_t - \mathbf{Y}_t \Vert_{L^p(\Omega)} + \|f_m\|_{L^q}  \sqrt[2p]{\frac{\ln(1/\eta)}{n}}\Vert V_t^+ - U_t^+\Vert_{\infty}\\
&\leq \|f_m\|_{L^q}\Vert f_{Z_t} \Vert_{\infty}^{\frac{1}{p}}\Vert V_t - U_t \Vert_{L^p(\mathbb{R})} +  \|f_m\|_{L^q}\sqrt[2p]{\frac{\ln(1/\eta)}{n}} \Vert V_t - U_t\Vert_{\infty}.
\end{split}
\end{equation*}
}

\subsection{Statement and proof of Lemma \ref{lem:ord_diffs}}\label{}
\begin{lemma}\label{lem:ord_diffs}
 For two sequences of real numbers $\mathbf{a} = \left(a^i\right)_{i=1}^{n}$ and $\mathbf{b} = \left(b^i\right)_{i=1}^{n}$, consider $\mathbf{a}^{\uparrow}  = (a^{(i)})_{i=1}^n$ and $\mathbf{b}^{\uparrow}  = (b^{(i)})_{i=1}^n$, where  $a^{(i)}$ and $b^{(i)}$ denote the $i$-th smallest elements of the sets $\{a^i\}_{i=1}^n$ and $\{b^i\}_{i=1}^n$ respectively. 
 For any $p\in[1,\infty]$,we have
\[\left\Vert \mathbf{a}^{\uparrow} - \mathbf{b}^{\uparrow} \right\Vert_{\ell^p} \leq \Vert \mathbf{a} - \mathbf{b} \Vert_{\ell^p} \]
\end{lemma}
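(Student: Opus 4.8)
The plan is to prove the rearrangement inequality $\Vert \mathbf{a}^{\uparrow} - \mathbf{b}^{\uparrow} \Vert_{\ell^p} \leq \Vert \mathbf{a} - \mathbf{b} \Vert_{\ell^p}$ by reducing it to the case of a single transposition. First I would observe that any permutation $\pi$ of $\{1,\dots,n\}$ can be written as a composition of adjacent transpositions, and that $\mathbf{b}^{\uparrow}$ is a permutation $\mathbf{b}_\pi = (b^{\pi(i)})_i$ of $\mathbf{b}$. So it suffices to show that the sorted sequence is reached from $\mathbf{b}$ by a sequence of transpositions, each of which does not increase the $\ell^p$-distance to the (fixed, already sorted) target $\mathbf{a}^{\uparrow}$. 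Concretely: if in a given arrangement $\mathbf{b}_\sigma$ there exist indices $i<j$ with $b^{\sigma(i)} > b^{\sigma(j)}$ (an inversion relative to the sorted order of the $a$'s, whose entries are nondecreasing in the index), then swapping them produces $\mathbf{b}_{\sigma'}$ with $\Vert \mathbf{a}^{\uparrow} - \mathbf{b}_{\sigma'}\Vert_{\ell^p} \leq \Vert \mathbf{a}^{\uparrow} - \mathbf{b}_{\sigma}\Vert_{\ell^p}$; iterating (bubble-sort style) terminates at $\mathbf{b}^{\uparrow}$, and chaining the inequalities gives the claim with $\mathbf{a}$ replaced by $\mathbf{a}^{\uparrow}$. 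Since $\Vert \mathbf{a} - \mathbf{b}\Vert_{\ell^p}$ is invariant under applying the \emph{same} permutation to both sequences, we may assume at the outset that $\mathbf{a} = \mathbf{a}^{\uparrow}$ is already sorted, which is the form used above.

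The heart of the argument is therefore the two-element computation. Fix $a_1 \le a_2$ (the two target entries, in sorted order) and two values $b_1, b_2$ with $b_1 > b_2$; I must show
\begin{equation*}
|a_1 - b_2|^p + |a_2 - b_1|^p \le |a_1 - b_1|^p + |a_2 - b_2|^p
\end{equation*}
for $1 \le p < \infty$, and the analogous statement with $\max$ in place of the sum when $p = \infty$. For the finite-$p$ case the cleanest route is to reduce to the two functions $t\mapsto |a_1 - t|^p$ and $t\mapsto |a_2-t|^p$ and exploit that, because $a_1 \le a_2$, the difference $g(t) := |a_2 - t|^p - |a_1 - t|^p$ is nondecreasing in $t$; then $g(b_1) \ge g(b_2)$, i.e. $|a_2-b_1|^p - |a_1 - b_1|^p \ge |a_2 - b_2|^p - |a_1-b_2|^p$, which rearranges to exactly the desired inequality. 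Monotonicity of $g$ follows from a short case analysis on the position of $t$ relative to $a_1, a_2$, or from differentiating $|a_2-t|^p - |a_1-t|^p$ and checking the sign. For $p=\infty$ one argues directly: from $b_1 > b_2$ and $a_1 \le a_2$ one checks that $\max\{|a_1 - b_2|, |a_2 - b_1|\} \le \max\{|a_1-b_1|,|a_2-b_2|\}$ by considering whether the interval $[a_1,a_2]$ lies left of, right of, or overlapping $[b_2,b_1]$.

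The main obstacle I anticipate is purely organizational rather than deep: making the "iterate until sorted" step rigorous without hand-waving — i.e. verifying that the bubble-sort process does terminate after finitely many adjacent transpositions in the sorted arrangement, and that each step genuinely corresponds to a pair $i<j$ with $b^{\sigma(i)} > b^{\sigma(j)}$ so that the two-element lemma applies (here one uses that the targets $a^{(i)}$ are nondecreasing in $i$, so an out-of-order pair in $\mathbf{b}$ is exactly a pair that the lemma improves). An alternative that sidesteps the induction entirely, and which I might prefer for brevity, is to invoke a known rearrangement/Hardy–Littlewood-type inequality stating that among all permutations $\pi$, $\sum_i |a^{(i)} - b^{\pi(i)}|^p$ is minimized when $b^{\pi(\cdot)}$ is sorted in the same order as $a^{(\cdot)}$; but since the paper is self-contained, I would lean toward the explicit transposition argument. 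Either way, once the two-element inequality is in hand the rest is routine.
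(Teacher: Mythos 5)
Your transposition argument is a genuinely different route from the paper's proof, which is a two-line appeal to majorization theory: citing Marshall--Olkin, the vector $\mathbf{a}-\mathbf{b}$ majorizes $\mathbf{a}^{\uparrow}-\mathbf{b}^{\uparrow}$, and Schur-convexity of $x\mapsto\sum_i |x_i|^p$ then yields the inequality for $1\le p<\infty$, with $p=\infty$ obtained by letting $p\to\infty$. Your bubble-sort reduction to a two-element inequality is more elementary and self-contained (the paper leans on two cited propositions), at the cost of the organizational bookkeeping you already anticipate; the reduction to sorted $\mathbf{a}$, the termination argument, and the $p=\infty$ case analysis are all sound.

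There is, however, one concrete sign error at the heart of your two-element step: the monotonicity of $g(t)=|a_2-t|^p-|a_1-t|^p$ goes the other way. For $a_1\le a_2$ the function $g$ is \emph{nonincreasing} (try $a_1=0$, $a_2=1$, $p=2$: $g(t)=1-2t$; the case analysis over $t\le a_1$, $a_1\le t\le a_2$, $t\ge a_2$ confirms $g'\le 0$ throughout for $p\ge1$). The inequality you need, $|a_1-b_2|^p+|a_2-b_1|^p\le|a_1-b_1|^p+|a_2-b_2|^p$, rearranges to $g(b_1)\le g(b_2)$, which is exactly what a nonincreasing $g$ delivers for $b_1>b_2$. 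As written, your chain ("$g$ nondecreasing, hence $g(b_1)\ge g(b_2)$") rearranges to the \emph{reverse} of the desired inequality, so the step fails as stated. The fix is simply to flip the direction of the monotonicity claim; with that correction the argument goes through.
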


The lemma can be shown using the theory of majorization, see, e.g.~\cite{marshall2011}. We start considering $p\in[1, \infty)$. By~\cite[Chapter 6,  Proposition A.2]{marshall2011}, $\mathbf{a}-\mathbf{b}$ majorizes $\mathbf{a}^\uparrow - \mathbf{b}^\uparrow$. As $f(x) = |x|^p$ is continuous and convex,~\cite[Chapter 4, Proposition B.1]{marshall2011} shows that
\[ \sum_{i=1}^n |a^{(i)} - b^{(i)}|^p \leq \sum_{i=1}^n |a^{i} - b^{i}|^p, \]
hence  $\|\mathbf{a}^\uparrow - \mathbf{b}^\uparrow\|_{\ell^p} \leq \|\mathbf{a} - \mathbf{b}\|_{\ell^p}$. Taking the limit  $p\rightarrow \infty$ shows the statement for all $p\in[1, \infty]$.

\section{Calculation methodology of the numerical experiments}
In this chapter, we first introduce the option pricing models and their  calibration and then provide a step-by-step algorithm for the numerical experiments. 
\subsection{Option pricing models and calibration to market data}\label{app:params}

	We recall considered pricing models and options in Table \ref{tab:parametrisation}, wherein we also report parameters used in the numerical experiments. In the following we describe the procedure employed for calibration of the pricing models to market data.
	
	\begin{table}[]
\centering
\begin{tabular*}{\linewidth}{@{}lr@{\extracolsep{4pt}}rr@{}}
\toprule
                                                                                                                                                                                                  & \multicolumn{3}{c}{\textbf{Parameters}}                                                                                                                               \\ \cmidrule{2-2} \cmidrule{3-4}
\textbf{Model}                                                                                                                                                                                    & Market-observed                                                    & Calibrated                                         & \textbf{ARPE}                               \\ \midrule
Black-Scholes-Merton (BSM) \cite{Black1973,Merton1973}                                                                                                                                                                        & $S_0 = 3825.33$                                                    & $\sigma = 0.1943$                                  & $12.55\%$                                   \\
$dS_t = \mu S_t dt + \sigma S_t dW_t$                                                                                                                                                             & $\mu = 0.11$                                                       &                                                    &                                             \\
                                                                                                                                                                                                  & $r = 0.0110$                                                       &                                                    &                                             \\
                                                                                                                                                                                                  & \multicolumn{1}{l}{}                                               &                                                    &                                             \\
Merton's jump-diffusion (MJD) \cite{Merton1976}                                                                                                                                                                     & $S_0 = 3825.33$                                                    & $\sigma = 0.1483$                                  & $6.97\%$                                    \\
\multirow{2}{*}{$\begin{aligned} dS_t &= \mu S_t dt + \sigma S_t d W_t + J_t S_t d Q_t\\ J_t &\sim \mathcal{N}(\gamma,\delta^2)\end{aligned}$}                                                   & $\mu = 0.11$                                                       & $\lambda = 1.2998$                                 &                                             \\
                                                                                                                                                                                                  & $r = 0.0110$                                                       & $\gamma = -0.1475$                                 &                                             \\
                                                                                                                                                                                                  & \multicolumn{1}{l}{}                                               & $\delta  = 0.1331$                                 &                                             \\
                                                                                                                                                                                                  & \multicolumn{1}{l}{}                                               &                                                    &                                             \\
Heston's stochastic volatility (HSV)                                                                                                                                                             \cite{Heston1993} & $S_0 = 3825.33$                                                    & $\nu_0 = 0.0650$                                   & $2.88\%$                                    \\
\multirow{4}{*}{$\begin{aligned} d S_t & = \mu S_t dt + \sqrt{\nu_t} S_t dW_t^S \\ d \nu_t & = \kappa (\theta - \nu_t)dt + \eta \sqrt{\nu_t}dW_t^{\nu}\\ d&W^S dW^{\nu} = \rho dt \end{aligned}$} & $\mu = 0.11$                                                       & $\kappa = 2.3134$                                  &                                             \\
                                                                                                                                                                                                  & $r = 0.0110$                                                       & $\theta = 0.0889$                                  &                                             \\
                                                                                                                                                                                                  &                                                                    & $\eta = 0.9898$                                    &                                             \\
                                                                                                                                                                                                  &                                                                    & $\rho = -0.7040$                                   &                                             \\ \midrule
\textbf{Option}                                                                                                                                                                                   & \multicolumn{1}{l}{\textbf{Payoff}}                                & \multicolumn{2}{r}{\textbf{Parameters}}                                                          \\ \midrule
European call                                                                                                                                                                                     & \multicolumn{1}{l}{$(S_T-K)^+$}                                    & \multicolumn{2}{r}{\multirow{4}{*}{$\begin{aligned}T=1,&\\ K=3825.33,&\\ B =5738& \end{aligned}$}} \\
Digital put                                                                                                                                                                                       & \multicolumn{1}{l}{$\mathbbm{1}_{\{S_T < K\}}$}                    & \multicolumn{2}{r}{}                                                                             \\
Up-and-out barrier call                                                                                                                                                                           & \multicolumn{1}{l}{$(S_T-K)^+ \mathbbm{1}_{\{\max_{t\leq T} S_t < B\}}$} & \multicolumn{2}{r}{}                                                                             \\
American put                                                                                                                                                                                      & \multicolumn{1}{l}{$(K-S_t)^+$ if exercised at $t\leq T$}                & \multicolumn{2}{r}{}                                                                             \\ \bottomrule
\end{tabular*}
\caption{Parametrisation of options and models with the model fit to market data measured by the calibration error ARPE \eqref{eq:arpe}. $S_t$ denotes an uncertain price of an underlying asset at time $t>0$. $W, W^S$, and $W^{\nu}$ are standard Brownian motions, where $W^S$ and $W^{\nu}$ have correlation $\rho$. $Q$ is a Poisson process with intensity $\lambda>0$ .}
\label{tab:parametrisation}
\end{table}

	Aiming to match the observed market prices as closely as possible, we estimate the parameters of the pricing models by minimizing average relative pricing errors (ARPE),
	\begin{equation}\label{eq:arpe}
	\mathrm{ARPE} = \frac{1}{\text{number of options}} \sum_{\text{options}} \frac{\left|\text{market price } - \text{model price}\right|}{\text{market price}},
	\end{equation}
	using the differential evolution algorithm \cite{Storn1997} (see \cite{Gilli2012} for the application to calibration). The choice of $\mathrm{ARPE}$ for the objective function is motivated by the fact that we include options with a wide range of strikes, leading to a both very small and very large option prices in the dataset.
	
	The market data obtained from Barchart.com consists of prices of European options written on S\&P 500 Index (SPX) traded on July 1, 2022, when the index had a value of $S_0 = 3\,825.33$. We consider the options with 1, 2, 3, 6, 9, and 12 months to expiry, and all available strikes. We convert the prices of put options to the prices of the corresponding call options using the put-call parity. We set the risk-free rate to $r = 0.0110$, which is the yield of a US treasury one year zero coupon bond on July 1, 2022 ($0.0279$) reduced by the dividend yield of SPX reported on June 30, 2022 ($0.0169$). We estimate the drift under the historical measure as a simple average of yearly returns in the past 10 years, yielding $\mu = 0.11$. We report calibrated model parameters and the corresponding $\mathrm{ARPE}$s in Table \ref{tab:parametrisation}.

We simulate the scenarios under the physical measure, i.e., using the historical average return $\mu$ as the drift, while for pricing we use a risk-neutral measure by adapting the drift accordingly.

\subsection{Step-by-step methodology of the numerical experiments}\label{sec:calc_met_int_dom_deg}

In this section, we provide implementation details of the numerical experiments presented in Section \ref{subsec:num_res}. We highlight the key steps of the computations.

As a first step, we simulate three different sets of market scenarios, one for each of the calibrated models from Table \ref{tab:parametrisation}. Once simulated, scenarios remain fixed throughout the experiments.

\emph{1. Simulate $n$ paths $\{z_{t_u}^i : u = 0,1,..., m\}$, $i = 1,...,n$, of the risk factors over $m+1$ equidistant times $t_u$ in $[0, T]$ using the Euler-Maruyama method.}

	In the BSM and MJD models, $z_{t_u}^i = s_{t_u}^i$, while in the HSV model, $z_{t_u}^i = \left(s_{t_u}^i, \nu_{t_u}^i\right)$. For the experiments of Section \ref{subsec:num_res} we chose $n = 10\,000$ and $m = 52$.

\emph{2. Evaluate reference pricers $V_{t_u}$ path-wise, i.e., calculate $x_{t_u}^i = V_{t_u}(z_{t_u}^i)^+$ for all $i = 1,...,n$ and $u = 1,...,m$.}

We decide to evaluate the reference pricers path-wise in a loop to keep the different methods comparable.

\emph{3. Accelerate the path-wise exposure calculations with the Chebyshev interpolation.}

\emph{3.1. Set-up the interpolation domain for every time point $t_u$.}
		
		With realisations of the risk factors at hand, the interpolation domain at $t_u$ in the asset-price dimension can be simply chosen as $\Large[\underline{s}, \overline{s}\Large]$, where $\underline{s} = \min_{i = 1,...,n} \{ s_{t_u}^i \}$, and $\overline{s} = \max_{i = 1,...,n} \{ s_{t_u}^i \}$. A more informed choice of the interpolation domain can be made based on the explicitly available formulas for the asymptotic behaviour of the considered options' value functions. Namely, $V_{t_u}$ behaves as either appropriately discounted payoff or $s \mapsto 0$ for sufficiently high or low levels of underlying's price $s$.
		
		As approximating the analytically known asymptotics  is an unnecessary effort, we (attempt to) truncate the interval $[\underline{s}, \overline{s}]$ by the bisection method. That is, we search for the points $\underline{s}^{\prime} \geq \underline{s}$ and $\overline{s}^{\prime} \leq \overline{s}$ where the value function $V_{t_u}$ is sufficiently close to the analytically known asymptotic formulas. The interpolation is then done on $[\underline{s}^{\prime}, \overline{s}^{\prime}]$, the resulting polynomial approximation called for $s_{t_u}^i \in [\underline{s}^{\prime}, \overline{s}^{\prime}]$, and known analytical formulas used for evaluating at $s_{t_u}^i \in [\underline{s}, \overline{s}] \setminus [\underline{s}^{\prime}, \overline{s}^{\prime}]$.

For the HSV model, we use the rectangular interpolation domain induced by the MC paths, obtained by taking the Cartesian product of previously defined $[\underline{s}, \overline{s}]$ with the analogously defined interval in the variance dimension, $[\underline{\nu}, \overline{\nu}]$.

Additionally, following \cite{glau2019fast}, we split the interpolation domain in the asset-price dimension at strike $K$ in all cases. The smoothness of the value functions usually deteriorates around $K$ hindering the speed of convergence of the Chebyshev interpolation. The effect is especially pronounced for short maturities where $V_{t_u}$ approaches a payoff function which either has a kink or a jump at $K$.

\emph{3.2. Construct the interpolation of the pricer $z \mapsto V_{t_u}(z)$ for every $t_u$, i.e., obtain the polynomial approximations $U_{t_u}(z) \approx V_{t_u}(z) $, $u = 1,...,m$.}

\emph{3.3. Evaluate obtained approximation $U_{t_u}$ path-wise, i.e., calculate $y_{t_u}^i = U_{t_u}(z_{t_u}^i)^+$ for $i = 1,...,n$, $u = 1,...,m$.}

To maintain comparability, as in Step 2. we evaluate Chebyshev approximations path-wise in a loop.

\emph{4. Addressing knock-outs and early-exercises for path-dependent options.}
	
		When calculating the exposures of the barrier options, if the option is knocked-out on path $i$ at time $t_u$, set the exposure at that and all future time steps on path $i$ to zero.

		When calculating the exposures of a American option, we first estimate the early-exercise boundary for each time-step. If the option is exercised on path $i$ at time $t_u$, we set the exposures at all future time steps on path $i$ to zero.

	In the BSM and MJD models, we estimate early-exercise boundaries for each $t_u$ by the bisection method, comparing the continuation value at $t_u$ with the payoff at $t_u$.
	
	In the HSV model, we first divide the interpolation domain in the variance dimension  $[\underline{\nu}, \overline{\nu}]$ at $t_u$ in 10 bins. To obtain the early exercise boundary at $t_u$ in the full re-evaluation approach, we again apply the bisection method, as in the BSM and MJD models, for each bin.
	
	In the accelerated HSV case, for each variance bin, we interpolate the early-exercise boundary in time using $5$ Chebyshev points on $[0,0.9]$, and use the bisection method for $t_u \in [0.9, 1)$. This way we achieve additional run-time savings, as instead of performing bisection method for each of 52 time steps $t_u$, we only perform it $10$ times, in the 5 Chebyshev points on $[0,0.9]$, and for the last five time steps of the exposure simulation time-grid. 
	
	We note that at this point we have obtained path-wise exposures of both approaches, full re-evaluation and Chebyshev acceleration,
	\begin{equation*}
	\left\{ \mathbf{x}_{t_u} = \left( x_{t_u}^1, ..., x_{t_u}^n \right) \right\}_{u=1}^m \hspace{0.5cm}\text{and}\hspace{0.5cm} \left\{\mathbf{y}_{t_u} = \left( y_{t_u}^1, ..., y_{t_u}^n \right)\right\}_{u=1}^m,
	\end{equation*}
respectively. From path-wise exposures, we can extract empirical estimators $\widehat{\rho}$ of exposure measure $\rho$ at time points of interest. Particularly, we will consider $\mathrm{EE}$, $\mathrm{PFE}_{0.95}$, and $\mathrm{CES}_{0.95}$, with explicit formulas for their empirical estimators, which are given in Appendix \ref{sec:exp_mea_exa}.

\emph{5. Calculate empirical time-profiles of the 3 considered exposure measures via full re-evalutation, $\left\{ \widehat{\rho}(\mathbf{x}_{t_u})\right\}_{u=1}^m$, and the accelerated approach, $\left\{ \widehat{\rho}(\mathbf{y}_{t_u})\right\}_{u=1}^m$}.

\emph{6. Calculate the maximum relative error in the exposure measure's time-profile induced by the function approximation,}
	\begin{equation}\label{eq:mre}
		\varepsilon^{\mathrm{CR}}_{\widehat{\rho}} = \max_{u=1,...,m} \frac{\left| \widehat{\rho}(\mathbf{x}_{t_u}) - \widehat{\rho}(\mathbf{y}_{t_u})\right|}{\widehat{\rho}(\mathbf{x}_{t_u})}.
		\end{equation}

		In the experiments of Section \ref{subsec:num_res} we chose to base the target acceleration accuracy on the statistical accuracy of the standard simulation-based methodology. For this, we quantify the MC errors as follows.

\emph{7. Asses the statistical accuracy of the underlying MC simulation.}
	
	Denote with $u_{\widehat{\rho}}^*$ the time index where the maximum in \eqref{eq:mre} is achieved for the exposure measure $\widehat{\rho}$. We first calculate the lengths of the 95\% confidence intervals for the MC estimates $\widehat{\rho}\left(\mathbf{x}_{t_{u_{\widehat{\rho}}^*}}\right)$,
	\begin{equation*}
	 l\left( \widehat{\rho}\left(\mathbf{x}_{t_{u_{\widehat{\rho}}^*}}\right) \right) = 2 q_{\frac{\alpha}{2}}\frac{\sigma\left( \widehat{\rho}\left(\mathbf{x}_{t_{u_{\widehat{\rho}}^*}}\right) \right)}{\sqrt{n}}
	\end{equation*}
	where $q_{\frac{\alpha}{2}}=\Phi^{-1}(1-\frac{\alpha}{2})$, with $\alpha = 0.05$, and $\sigma^2\left( \widehat{\rho}\left(\cdot\right) \right)$ is the MC sample estimate of the variance in an appropriate asymptotic result, \eqref{eq:EE_clt}, \eqref{eq:PFE_clt}, or \eqref{eq:CES_clt}. Then, we compute the relative error of the MC simulation for the exposure measure $\widehat{\rho}$ as
	\begin{equation*}
	 \varepsilon^{\mathrm{MC}}_{\widehat{\rho}} = \frac{l\left( \widehat{\rho}\left(\mathbf{x}_{t_{u_{\widehat{\rho}}^*}}\right) \right)}{\widehat{\rho}\left(\mathbf{x}_{t_{u_{\widehat{\rho}}^*}}\right)}.
	\end{equation*}
	
\emph{8. Calculate speed-up factors, i.e., factors by which the accelerated
approach is faster than a standard full re-evaluation.}
		
		For all options except American, speed-up factors are calculated as
		\begin{equation*}
		\frac{\text{time needed for Step 2}}{\text{time needed for Step 3}}.
		\end{equation*}
		
		For American options, we additionally account for the times needed to determine early-exercise policies, and so
		\begin{equation*}
		\frac{\text{time needed for Step 2 + time needed for Step 4}}{\text{time needed for Step 3  + time needed for Step 4}}.
		\end{equation*}
	
	The numerical experiments of Sections \ref{subsec:num_res} and \ref{subsec:ad_ch}  were conducted on the high performance computing cluster at Queen Mary University of London, Apocrita \cite{king_2017_438045}. Specifically, the experiments utilized 4 cores of the Intel(R) Xeon(R) Platinum 8268 CPU @ 2.90GHz processor, with 2GB RAM allocated per core. The code was implemented in Python 3.7.6. The numerical experiments of Section \ref{sec:cheby_greeks} were conducted on a personal computer equipped with an Intel(R) Core(TM) i7-9850H CPU @ 2.60GHz and 16GB RAM. The code was implemented in Python 3.7.7.

{\subsection{Theoretical foundation of the domain splitting}
Corollary~\ref{cor:cheb_conv} gives us insights in which cases a domain decomposition of a Chebyhev interpolation may be beneficial. As the construction of a Chebyshev interpolation of degree $N$ requires $N+1$  function evaluations, we can split the domain and construct two Chebyshev interpolations of half the degree for basically the same number of function evaluations.  
For differentiantiablity of order $p$, the corollary shows that the error in one dimension is bounded by a constant times
\[
\frac{L^p}{(N-p)^p},
\]
where $L$ is the length of the interpolation domain. Taking half the domain size and half the degree yields an estimate of
\[
\frac{L^p}{(N-2p)^p},
\]
which for large $N$ is asymptotically the same as the original estimate. The smaller $p$, the closer both estimates are, indicating that a domain split is justified where a function has low smoothness. In our case, this is at the strike.

For analytic functions a domain decomposition is not advisable for high enough degrees. We can follow Assumption~\eqref{ass2} for $\alpha=2$, yielding an interpolation error in one dimension of a constant times
\[
e^{b(L^2-N)}.
\]
Again keeping the effort equal, we can consider two domains of half the size and half the interpolation degree. Then the error bound is given as
\[
e^{\frac{b}{2}(\frac12 L^2-N)}
\]
For $N> L^2$, this is a larger estimate than the original estimate, indicating that for analytic functions increasing the degree is more efficient than a domain splitting.

}
\bibliographystyle{abbrv}  
\bibliography{refs} 

\end{document}